\newcommand{\sm}[1]{\ensuremath{\lambda^{(#1)}}} %
\DeclareMathOperator*{\arcsinh}{arcsinh}
\newcommand{\PPoly}{\cc{P}/\cc{Poly}}
\title{Hardness of Bounded Distance Decoding on Lattices in $\ell_p$
  Norms}
\author{Huck Bennett\thanks{University of Michigan,
    \texttt{hdbco@umich.edu}.}
  \and Chris Peikert\thanks{University of Michigan,
    \texttt{cpeikert@umich.edu}}}
\begin{document}
\maketitle

\begin{abstract}
Bounded Distance Decoding $\BDD_{p,\alpha}$ is the problem of decoding
a lattice when the target point is promised to be within an~$\alpha$
factor of the minimum distance of the lattice, in the~$\ell_{p}$ norm.
We prove that $\BDD_{p, \alpha}$ is $\NP$-hard under randomized
reductions where~$\alpha \to 1/2$ as~$p \to \infty$ (and for
$\alpha=1/2$ when $p=\infty$), thereby showing the hardness of
decoding for distances approaching the unique-decoding radius for
large~$p$. We also show \emph{fine-grained} hardness for
$\BDD_{p,\alpha}$. For example, we prove that for all
$p \in [1,\infty) \setminus 2\Z$ and constants $C > 1, \eps > 0$,
there is no $2^{(1-\eps)n/C}$-time algorithm for $\BDD_{p,\alpha}$ for
some constant~$\alpha$ (which approaches $1/2$ as $p \to \infty$),
assuming the randomized Strong Exponential Time
Hypothesis~(SETH). Moreover, essentially all of our results also hold (under
analogous non-uniform assumptions) for $\BDD$ with
\emph{preprocessing}, in which unbounded precomputation can be applied
to the lattice before the target is available.

Compared to prior work on the hardness of $\BDD_{p,\alpha}$ by Liu,
Lyubashevsky, and Micciancio (APPROX-RANDOM 2008), our results improve
the values of~$\alpha$ for which the problem is known to be $\NP$-hard
for all $p > p_1 \approx 4.2773$, and give the very first fine-grained
hardness for $\BDD$ (in any norm).  Our reductions rely on a special
family of ``locally dense'' lattices in~$\ell_{p}$ norms, which we
construct by modifying the integer-lattice sparsification technique of
Aggarwal and Stephens-Davidowitz (STOC 2018).

\end{abstract}

\section{Introduction}
\label{sec:intro}

Lattices in~$\R^{n}$ are a rich source of computational problems with
applications across computer science, and especially in cryptography
and cryptanalysis. (A lattice is a discrete additive subgroup
of~$\R^{n}$, or equivalently, the set of integer linear combinations
of a set of linearly independent vectors.)  Many
important lattice problems appear intractable, and there is a wealth
of research showing that central problems like the Shortest Vector
Problem~(SVP) and Closest Vector Problem~(CVP) are $\NP$-hard, even to
approximate to within various factors and in various~$\ell_{p}$
norms~\cite{emde81:_anoth_np,journals/jcss/AroraBSS97,conf/stoc/Ajtai98,journals/siamcomp/Micciancio00,journals/tit/Micciancio01,journals/jcss/Khot06,journals/jacm/Khot05,journals/toc/HavivR12,journals/toc/Micciancio12}. (For
the sake of concision, throughout this introduction the term
``$\NP$-hard'' allows for \emph{randomized} reductions, which are
needed in some important cases.)

\paragraph{Bounded Distance Decoding.}

In recent years, the emergence of lattices as a powerful foundation
for cryptography, including for security against quantum attacks, has
increased the importance of other lattice problems. In particular,
many modern lattice-based encryption schemes rely on some form of the
\emph{Bounded Distance Decoding}~(BDD) problem, which is like the
Closest Vector Problem with a promise. An instance of $\BDD_{\alpha}$
for \emph{relative distance} $\alpha > 0$ is a lattice~$\lat$ and a
target point~$\vec{t}$ whose distance from the lattice is guaranteed
to be within an~$\alpha$ factor of the lattice's minimum distance
$\lambda_1(\lat) = \min_{\vec{v} \in \lat \setminus \set{\vec{0}}}
\norm{\vec{v}}$, and the goal is to find a lattice vector within that
distance of~$\vec{t}$; when distances are measured in the~$\ell_{p}$
norm we denote the problem $\BDD_{p,\alpha}$. Note that when
$\alpha < 1/2$ there is a unique solution, but the problem is
interesting and well-defined for larger relative distances as well. 
We also consider \emph{preprocessing} variants of CVP and BDD
(respectively denoted CVPP and BDDP), in which
unbounded precomputation can be applied to the lattice before the
target is available. For example, this can model cryptographic
contexts where a fixed long-term lattice may be shared among many
users.

The importance of BDD(P) to cryptography is especially highlighted by
the Learning With Errors~(LWE) problem of
Regev~\cite{journals/jacm/Regev09}, which is an average-case form of
BDD that has been used (with inverse-polynomial~$\alpha$) in countless
cryptosystems, including several that share a lattice among many users
(see, e.g.,~\cite{DBLP:conf/stoc/GentryPV08}). Moreover, Regev gave a
worst-case to average-case reduction from BDD to LWE, so the security
of cryptosystems is intimately related to the worst-case complexity of
BDD.

Compared to problems like SVP and CVP, the BDD(P) problem has received
much less attention from a complexity-theoretic perspective. We are
aware of essentially only one work showing its $\NP$-hardness: Liu,
Lyubashevsky, and Micciancio~\cite{conf/approx/LiuLM06} proved that
$\BDD_{p,\alpha}$ and even $\BDDP_{p,\alpha}$ are $\NP$-hard for
relative distances approaching $\min \set{1/\sqrt{2}, 1/\sqrt[p]{2}}$,
which is $1/\sqrt{2}$ for $p \geq 2$. A few other works relate BDD(P)
to other lattice problems (in both directions) in regimes where the
problems are not believed to be $\NP$-hard,
e.g.,~\cite{conf/soda/Micciancio08,conf/coco/DadushRS14,conf/icalp/BaiSW16}.
(Dadush, Regev, and Stephens-Davidowitz~\cite{conf/coco/DadushRS14}
also gave a reduction that implies $\NP$-hardness of
$\BDD_{2, \alpha}$ for any $\alpha > 1$, which is larger than the
relative distance of $\alpha = 1/\sqrt{2} + \eps$ achieved
by~\cite{conf/approx/LiuLM06}.)

\paragraph{Fine-grained hardness.}

An important aspect of hard lattice problems, especially for
cryptography, is their \emph{quantitative} hardness. That is, we want
not only that a problem cannot be solved in polynomial time, but that
it cannot be solved in, say, $2^{o(n)}$ time or even $2^{n/C}$ time
for a certain constant~$C$. Statements of this kind can be proven
under generic complexity assumptions like the Exponential Time
Hypothesis~(ETH) of Impagliazzo and
Paturi~\cite{journals/jcss/ImpagliazzoP01} or its variants like Strong
ETH~(SETH), via \emph{fine-grained} reductions that are particularly
efficient in the relevant parameters.

Recently, Bennett, Golovnev, and
Stephens-Davidowitz~\cite{conf/focs/BennettGS17} initiated a study of
the fine-grained hardness of lattice problems, focusing on CVP;
follow-up work extended to SVP and showed more for
CVP(P)~\cite{conf/stoc/AggarwalS18,aggarwal2019finegrained}. The
technical goal of these works is a reduction having good \emph{rank
  efficiency}, i.e., a reduction from $k$-SAT on~$n'$ variables to a
lattice problem in rank~$n=(C+o(1)) n'$ for some constant $C \geq 1$,
which we call the reduction's ``rank inefficiency.''  (All of the
lattice problems in question can be solved in $2^{n+o(n)}$
time~\cite{conf/stoc/AggarwalDRS15,conf/focs/AggarwalDS15,conf/soda/AggarwalS18},
so $C=1$ corresponds to optimal rank efficiency.) We mention that
Regev's BDD-to-LWE reduction~\cite{journals/jacm/Regev09} has optimal
rank efficiency, in that it reduces rank-$n$ BDD to rank-$n$ LWE.
However, to date there are no fine-grained $\NP$-hardness results for
BDD itself; the prior $\NP$-hardness proof for
BDD~\cite{conf/approx/LiuLM06} incurs a large polynomial blowup in
rank.

\subsection{Our Results}
\label{sec:our-results}

We show improved $\NP$-hardness, and entirely new fine-grained
hardness, for Bounded Distance Decoding (and BDD with preprocessing)
in arbitrary~$\ell_{p}$ norms. Our work improves upon the known
hardness of BDD in two respects: the relative distance~$\alpha$, and
the rank inefficiency~$C$ (i.e., fine-grainedness) of the
reductions. As~$p$ grows, both quantities improve, simultaneously
approaching the unique-decoding threshold $\alpha=1/2$ and optimal
rank efficiency of $C=1$ as $p \to \infty$, and achieving those
quantities for $p=\infty$. We emphasize that these are the first
fine-grained hardness results of any kind for BDD, for any~$\ell_{p}$
norm.

Our main theorem summarizing the $\NP$- and fine-grained hardness of
BDD (with and without preprocessing) appears below in
\cref{thm:bdd-hardness}. For $p \in [1,\infty)$ and $C > 1$, the
quantities $\alpha_{p}^{*}$ and $\alpha_{p,C}^{*}$ appearing in the
theorem statement are certain positive real numbers that are
decreasing in~$p$ and~$C$, and approaching~$1/2$ as $p \to \infty$
(for any~$C$).  See \cref{fig:bdd-hardness-bounds} for a plot of their
behavior, \cref{eq:alpha_pC,eq:alpha_p} for their formal definitions,
and \cref{lem:analytic-ub-alpha-p-star} for quite tight
closed-form upper bounds.

\begin{restatable}{theorem}{main}
  \label{thm:bdd-hardness}
  The following hold for $\BDD_{p,\alpha}$ and $\BDDP_{p,\alpha}$ in
  rank~$n$:
  \begin{enumerate}[itemsep=0pt]
  \item For every $p \in [1,\infty)$ and constant
    $\alpha > \alpha_{p}^{*}$ (where
    $\alpha_p^* \leq \tfrac12 \cdot 4.6723^{1/p}$), and for
    $(p,\alpha)=(\infty,1/2)$, there is no polynomial-time algorithm
    for $\BDD_{p,\alpha}$ (respectively, $\BDDP_{p,\alpha}$) unless
    $\NP \subseteq \RP$ (resp.,
    $\NP \subseteq \PPoly$).\label{item:bdd-np-hard}

  \item For every $p \in [1,\infty)$ and constant
    $\alpha > \min\set{\alpha_{p}^{*}, \alpha_{2}^{*}}$, and for
    $(p,\alpha)=(\infty,1/2)$, there is no $2^{o(n)}$-time algorithm
    for $\BDD_{p,\alpha}$ unless randomized ETH
    fails.\label{item:bdd-eth-hardness}
    
  \item For every $p \in [1,\infty) \setminus \set{2}$ and constant
    $\alpha > \alpha_{p}^{*}$, and for $(p,\alpha)=(\infty,1/2)$,
    there is no $2^{o(n)}$-time algorithm for $\BDDP_{p,\alpha}$
    unless non-uniform ETH fails.
    
    Moreover, for every $p \in [1,\infty]$ and
    $\alpha > \alpha_{2}^{*}$ there is no $2^{o(\sqrt{n})}$-time
    algorithm for $\BDDP_{p,\alpha}$ unless non-uniform ETH
    fails.\label{item:bdd-nu-eth-hardness}

  \item For every $p \in [1,\infty) \setminus 2\Z$ and constants
    $C > 1$, $\alpha > \alpha_{p,C}^{*}$, and $\epsilon > 0$, and for
    $(p,C,\alpha)=(\infty,1,1/2)$, there is no
    $2^{n(1-\epsilon)/C}$-time algorithm for $\BDD_{p,\alpha}$
    (respectively, $\BDDP_{p,\alpha}$) unless randomized SETH (resp.,
    non-uniform SETH) fails.\label{item:bdd-seth-hardness}
  \end{enumerate}
\end{restatable}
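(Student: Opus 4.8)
The plan is to prove all four items by instantiating a single reduction template: glue a hardness source for the Closest Vector Problem to a ``locally dense'' $\ell_p$ lattice, and tune the combination so that the composed instance provably lies within the $\BDD_{p,\alpha}$ promise radius while still faithfully encoding the source problem.

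First I would fix the ingredient built earlier in the paper: a family of locally dense lattices $\mathcal{L}_k \subseteq \mathbb{Z}^m$ with $m = m(k)$, a center $\vec{s}_k$, a radius $\rho_k$, and a linear label map $\phi_k$, such that (i) $\lambda_1^{(p)}(\mathcal{L}_k) \ge \mu_k$ with $\rho_k/\mu_k$ as close to $\alpha_p^*$ (resp.\ $\alpha_{p,C}^*$) as desired, (ii) for every target label value there is a lattice vector attaining it within $\ell_p$-distance $\rho_k$ of $\vec{s}_k$, and (iii) for the fine-grained items $m(k) = (C+o(1))k$ with $C$ arbitrarily close to the claimed rank inefficiency. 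For $p=\infty$ this family is simply $\mathbb{Z}^m$ with $\vec{s} = (\tfrac12,\dots,\tfrac12)$, $\rho = \tfrac12$, $\mu = 1$, $\phi = \mathrm{id}$, which is already perfectly rank-efficient and achieves $\alpha = \tfrac12$ with no sparsification; for finite $p$ it comes from sparsifying $\mathbb{Z}^m$ in the style of Aggarwal--Stephens-Davidowitz so as to push $\lambda_1^{(p)}$ up to $\mu_k$ while retaining an exponentially large, fully labelled cluster of lattice points near $\vec{s}_k$.

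Next I would pick the $\mathsf{CVP}$-hardness source matching each item and perform the gluing. For items~1--2 I would start from $\mathsf{GapCVP}_p$ with a suitable constant gap, which is $\NP$-hard (hence ETH-hard via a rank-linear reduction) for every $p$; for item~3 from the $\mathsf{CVPP}_p$ hardness reductions, whose rank efficiency is linear for $p\neq 2$ but only quadratic in general, which is exactly what forces the $p\neq 2$ caveat and the $2^{o(\sqrt n)}$ bound; and for item~4 from the Bennett--Golovnev--Stephens-Davidowitz reduction from $k$-SAT to \emph{exact} $\mathsf{CVP}_p$ with rank inefficiency tending to~$1$, available precisely for $p \notin 2\mathbb{Z}$. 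Given such an instance with basis $B$ (full column rank, and after a standard stretching so that $\lambda_1^{(p)}(\mathcal{L}(B))$ comfortably exceeds the target distance $D^{1/p}$) and target $\vec{t}$, I would form the block lattice generated by $\bigl(\begin{smallmatrix} \eta B & 0 \\ \eta M & G\end{smallmatrix}\bigr)$ with target $\bigl(\begin{smallmatrix}\eta\vec{t}\\ \vec{s}_k\end{smallmatrix}\bigr)$, where $G$ is a basis of $\mathcal{L}_k$, $k$ is matched to the witness dimension, $M$ wires each candidate $\mathsf{CVP}$ solution $\vec{x}$ to a label, and $\eta$ is a small balancing scalar. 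Property~(ii) makes $\mathrm{dist}_p(\text{target},\text{lattice})^p \le \eta^p D + \rho_k^p$; property~(i) together with full column rank of $B$ makes $\lambda_1^{(p)}$ of the block lattice at least $\min(\eta\lambda_1^{(p)}(\mathcal{L}(B)),\,\mu_k)$; and choosing $\eta$ in the window allowed by $\rho_k/\mu_k < \alpha$ makes the composed instance a legal $\BDD_{p,\alpha}$ instance. For soundness, the point of sparsification is that $\mathcal{L}_k$ has \emph{no unintended} cheap vectors near $\vec{s}_k$, so any lattice vector within the BDD radius must use an intended labelled cluster vector (dense cost $\le \rho_k^p$), leaving a budget of at most $\eta^p D$ for $\|B\vec{x}-\vec{t}\|_p^p$; hence reading a $\BDD_{p,\alpha}$-solver's output through $\phi_k$ recovers a genuine $\mathsf{CVP}$/$\mathsf{SAT}$ solution on yes-instances and provably cannot on no-instances. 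Since $\mathcal{L}_k$, $\vec{s}_k$, $M$, and $G$ depend only on $p$ and the rank, not on the formula, the entire construction is fixed in advance, which upgrades every statement to the preprocessing setting with $\NP\subseteq\PPoly$ and non-uniform (S)ETH, and the rank count $m(k) = (1+o(1))k$ --- multiplicative in, not additive on top of, the $\mathsf{CVP}$ rank by construction --- preserves the fine-grained constants.

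I expect the main obstacle to be the locally dense lattice family itself: simultaneously obtaining relative distance $\rho_k/\mu_k \to \tfrac12$ as $p\to\infty$, a fully labelled cluster of $2^{\Omega(k)}$ vectors within radius $\rho_k$ of the center, and near-optimal rank efficiency $m(k) = (1+o(1))k$, all from one Aggarwal--Stephens-Davidowitz-style sparsification of $\mathbb{Z}^m$ that raises $\lambda_1^{(p)}$ without destroying the cluster --- this is where the quantities $\alpha_p^*$, $\alpha_{p,C}^*$ and their closed-form bounds are forced. A secondary difficulty is the composition bookkeeping: choosing $\eta$ and the wiring $M$, handling the fact that exact (gapless) $\mathsf{CVP}_p$ for $p\notin 2\mathbb{Z}$ leaves only an integer-sized gap between yes- and no-instances, and verifying that the $\mathcal{L}_k$ block inflates the rank only by a $1+o(1)$ factor so the SETH constant $C$ survives.
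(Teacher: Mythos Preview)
Your plan follows the \cite{conf/approx/LiuLM06}-style template --- first build a stand-alone locally dense lattice equipped with a surjective \emph{label map}, then wire the CVP instance to it through a block basis and a balancing scalar~$\eta$ --- but the paper's reduction is organized differently, and that difference is where the contribution lies.

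Two concrete gaps in your route. First, Aggarwal--Stephens-Davidowitz sparsification of~$\Z^m$ only guarantees that \emph{some} $\{0,1\}^m$ vectors survive; it does not guarantee that the survivors hit every value of a predetermined linear $\phi_k$ onto $\{0,1\}^k$. Forcing full label coverage (say by a union bound over the $2^k$ labels) tightens the constraint on the modulus to roughly $q\lesssim 2^{m-k}$ instead of $q\lesssim 2^m$; at total rank $Cn'$ this means $m=(C-1)n'$ and $k=n'$, so the Mazo--Odlyzko condition becomes $\min_\tau \exp(\tau/(2\alpha)^p)\,\Theta_p(\tau)\le 2^{(C-2)/(C-1)}$ rather than the paper's $2^{1-1/C}$. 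Since $(C-2)/(C-1)<(C-1)/C$ for all $C>1$, you get a strictly larger threshold than $\alpha_{p,C}^*$, and the construction fails outright for $C\le 2$. Second, your block needs $\eta\,\lambda_1^{(p)}(\mathcal L(B))\ge\mu_k$, but there is no ``standard stretching'' that raises $\lambda_1$ of an arbitrary $\GapCVP'_p$ lattice without also inflating the yes-distance: rescaling $B\mapsto\gamma B$ preserves only the ratio, and appending $NI$ below $B$ shifts the yes-distance by $\Theta(N\,n'^{1/p})$, so the $\eta$-window you posit need not exist.

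The paper sidesteps both issues by reversing the order. It first forms
\[
B=\begin{pmatrix}\tfrac12 B' & 0\\ I_{n'} & 0\\ 0 & I_{n-n'}\end{pmatrix},\qquad \vec t=\tfrac12\begin{pmatrix}\vec t'\\ \vec 1_{n}\end{pmatrix},
\]
so the CVP coefficient vector $x\in\{0,1\}^{n'}$ is itself part of the $\{0,1\}^n$ close-vector structure (and the $I_{n'}$ block is exactly your ``stretching,'' but now it contributes to the $\Z^n$ short-vector count rather than to a separate CVP block). Only \emph{then} is the whole composed lattice sparsified once. All $2^{n-n'}$ close vectors already encode the same witness~$x$ at exactly the same distance, so no label map is needed --- a single survivor suffices, which is precisely what one-shot sparsification delivers --- and this is what yields the stated $\alpha_{p,C}^*$ and lets the reduction start from \emph{exact} $\GapCVP'_p$.
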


Although we do not have closed-form expressions for $\alpha_p^*$ and
$\alpha_{p, C}^*$, we do get quite tight closed-form upper bounds (see
\cref{lem:analytic-ub-alpha-p-star}). Moreover, it is easy to
numerically compute close approximations to them, and to the values
of~$p$ at which they cross certain thresholds. For example,
$\alpha_p^* < 1/\sqrt{2}$ for all $p > p_1 \approx 4.2773$, so
\cref{item:bdd-np-hard} of \cref{thm:bdd-hardness} improves on the
prior best relative distance of any $\alpha > 1/\sqrt{2}$ for the
$\NP$-hardness of $\BDD_{p,\alpha}$ in such $\ell_{p}$
norms~\cite{conf/approx/LiuLM06}.

As a few other example values and their consequences under
\cref{thm:bdd-hardness}, we have $\alpha_2^* \approx 1.05006$,
$\alpha_{3,2}^* \approx 1.1418$, and
$\alpha_{3,5}^* \approx 0.917803$. So by~\cref{item:bdd-eth-hardness},
BDD in the Euclidean norm for any relative distance $\alpha > 1.05006$
requires $2^{\Omega(n)}$ time assuming randomized ETH. And
by~\cref{item:bdd-seth-hardness}, for every $\eps > 0$ there is no
$2^{(1-\eps)n/2}$-time algorithm for $\BDD_{3,1.1418}$, and no
$2^{(1-\eps)n/5}$-time algorithm for $\BDD_{3,0.917803}$, assuming
randomized SETH.

\begin{figure}[t]
  \centering 
  \begin{tabular}{cc}
    \includegraphics[scale=0.43]{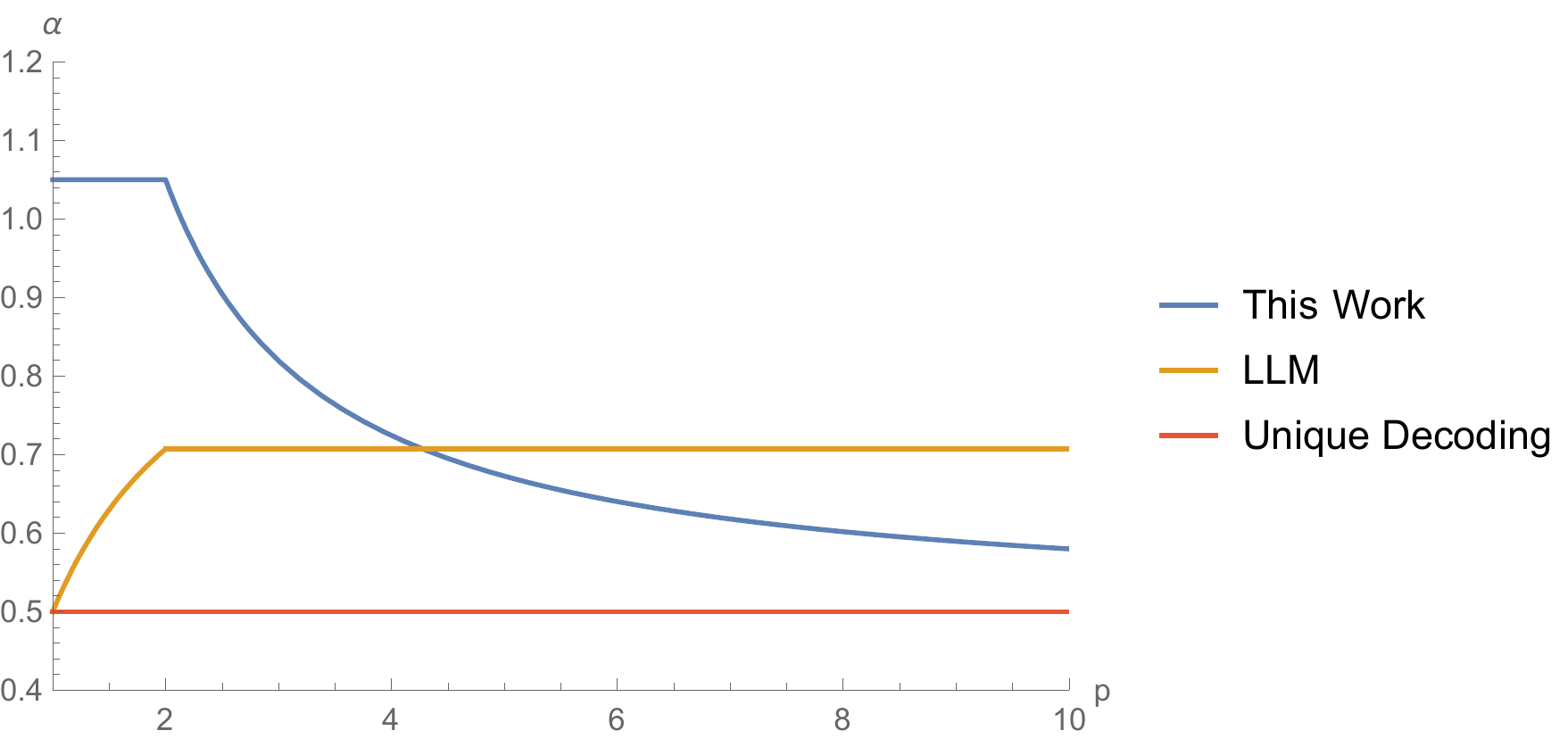} &
    \includegraphics[scale=0.43]{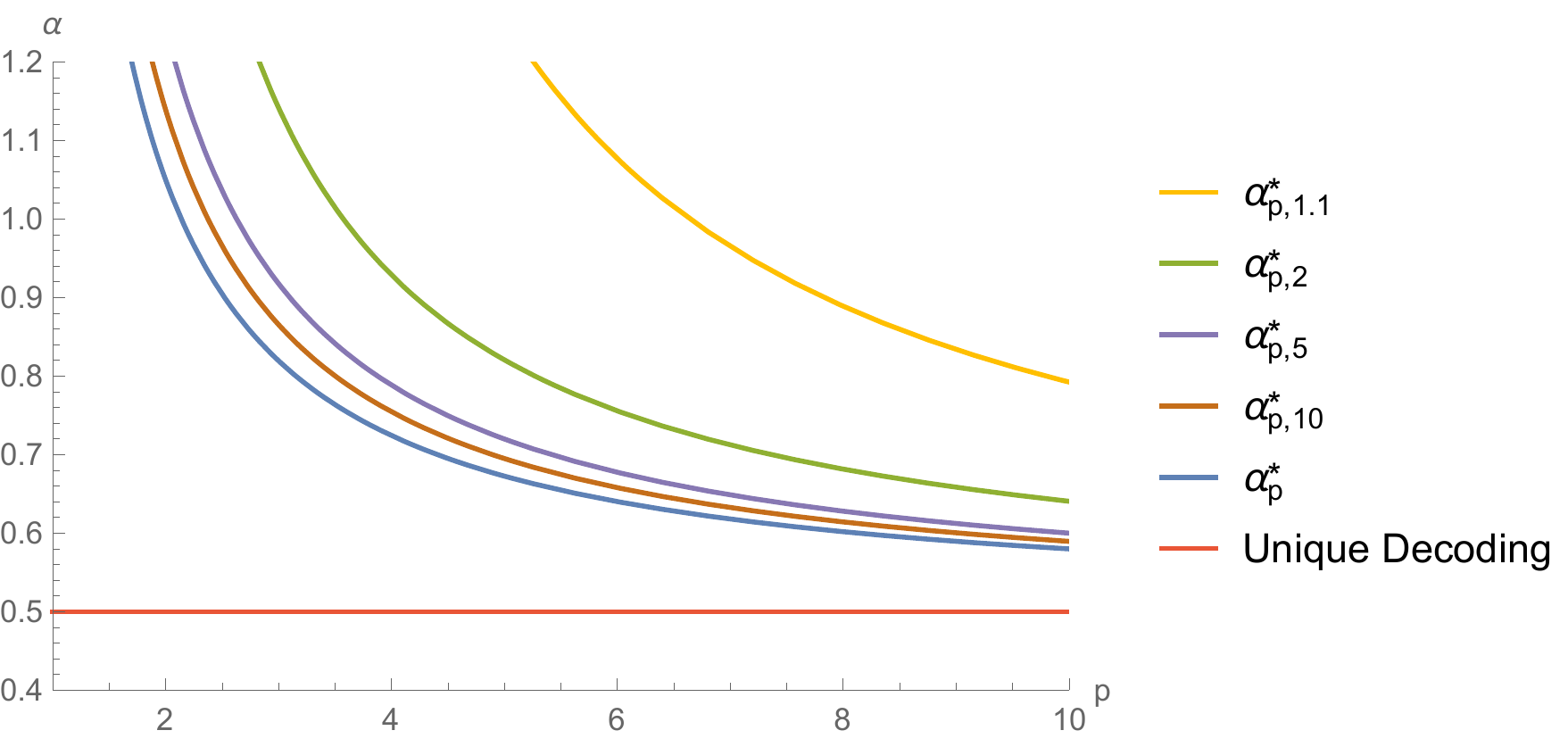}
  \end{tabular}
  \caption{Left: bounds on the relative distances $\alpha = \alpha(p)$
    for which $\BDD_{\alpha, p}$ was proved to be $\NP$-hard in
    the~$\ell_{p}$ norm, in this work and
    in~\cite{conf/approx/LiuLM06}; the crossover point is
    $p_1 \approx 4.2773$.  (The plots include results obtained by norm
    embeddings~\cite{conf/stoc/RegevR06}, hence they are maximized at
    $p=2$.)  Right: our bounds $\alpha_{p,C}^{*}$ on the relative
    distances $\alpha > \alpha_{p,C}^{*}$ for which there is no
    $2^{(1-\eps)n/C}$-time algorithm for $\BDD_{p,\alpha}$ for any
    $\eps > 0$, assuming randomized SETH.}
  \label{fig:bdd-hardness-bounds}
\end{figure}

\subsection{Technical Overview}
\label{sec:technical}

As in prior $\NP$-hardness reductions for SVP and BDD (and
fine-grained hardness proofs for the
former)~\cite{conf/stoc/Ajtai98,journals/siamcomp/Micciancio00,journals/jacm/Khot05,conf/approx/LiuLM06,journals/toc/HavivR12,journals/toc/Micciancio12,conf/stoc/AggarwalS18},
the central component of our reductions is a family of rank-$n$
lattices~$\lat \subset \R^{d}$ and target points~$\vec{t} \in \R^{d}$
having a certain ``local density'' property in a desired~$\ell_{p}$
norm. Informally, this means that~$\lat$ has ``large'' minimum
distance~$\sm{p}_{1}(\lat) := \min_{\vec{v} \in \lat \setminus
  \set{\vec{0}}} \norm{\vec{v}}_{p}$, i.e., there are no ``short''
nonzero vectors, but has many vectors ``close'' to the
target~$\vec{t}$. More precisely, we want $\sm{p}_{1}(\lat) \geq r$
and $N_{p}(\lat, \alpha r, \vec{t}) = \exp(n^{\Omega(1)})$ for some
relative distance~$\alpha$, where
\[ N_{p}(\lat, s, \vec{t}) := \abs{\set{\vec{v} \in \lat :
      \norm{\vec{v}-\vec{t}}_{p} \leq s}} \] denotes the number of
lattice points within distance~$s$ of~$\vec{t}$.

Micciancio~\cite{journals/siamcomp/Micciancio00} constructed locally
dense lattices with relative distance approaching $2^{-1/p}$ in
the~$\ell_{p}$ norm (for every finite $p \geq 1$), and used them to
prove the $\NP$-hardness of $\gamma$-approximate SVP in~$\ell_{p}$ for
any $\gamma < 2^{1/p}$. Subsequently, Liu, Lyubashevsky, and
Micciancio~\cite{conf/approx/LiuLM06} used these lattices to prove the
$\NP$-hardness of BDD in~$\ell_{p}$ for any relative distance
$\alpha > 2^{-1/p}$. However, these works observed that the relative
distance depends on~$p$ in the opposite way from what one might
expect: as~$p$ grows, so does~$\alpha$, hence the associated
$\NP$-hard SVP approximation factors and BDD relative distances
\emph{worsen}. Yet using norm embeddings, it can be shown
that~$\ell_{2}$ is essentially the ``easiest''~$\ell_{p}$ norm for
lattice problems~\cite{conf/stoc/RegevR06}, so hardness in~$\ell_{2}$
implies hardness in~$\ell_{p}$ (up to an arbitrarily small loss in
approximation factor). Therefore, the locally dense lattices
from~\cite{journals/siamcomp/Micciancio00} do not seem to provide any
benefits for $p > 2$ over $p=2$, where the relative distance
approaches $1/\sqrt{2}$.
In addition, the rank of these lattices is a large polynomial in the
relevant parameter, so they are not suitable for proving fine-grained
hardness.\footnote{We mention that Khot~\cite{journals/jacm/Khot05}
  gave a different construction of locally dense lattices with other
  useful properties, but their relative distance is no smaller than
  that of Micciancio's construction in any~$\ell_{p}$ norm, and their
  rank is also a large polynomial in the relevant parameter.}

\paragraph{Local density via sparsification.}

More recently, Aggarwal and
Stephens-Davidowitz~\cite{conf/stoc/AggarwalS18} (building
on~\cite{conf/focs/BennettGS17}) proved fine-grained hardness for
\emph{exact} SVP in~$\ell_{p}$ norms, via locally dense lattices
obtained in a different way. Because they target exact SVP, it
suffices to have local density for relative distance $\alpha=1$, but
for fine-grained hardness they need
$N_{p}(\lat, r, \vec{t}) = 2^{\Omega(n)}$, preferably with a large
hidden constant (which determines the rank efficiency of the
reduction).
Following~\cite{mazo90:_lattice_points,elkies91:_packing_densities},
they start with the integer lattice~$\Z^{n}$ and all-$\tfrac12$s
target vector $\vec{t} = \tfrac12 \vec{1} \in \R^{n}$. Clearly, there
are~$2^{n}$ lattice vectors all at distance $r = \tfrac12 n^{1/p}$
from~$\vec{t}$ in the $\ell_{p}$ norm, but the minimum distance of the
lattice is only~$1$, so the relative distance of the ``close'' vectors
is $\alpha = r$, which is far too large.

To improve the relative distance, they increase the minimum distance
to at least~$r = \tfrac12 n^{1/p}$ using the elegant technique of
\emph{random sparsification}, which is implicit
in~\cite{elkies91:_packing_densities} and was first used for proving
$\NP$-hardness of approximate SVP
in~\cite{journals/jcss/Khot06,journals/jacm/Khot05}. The idea is to
upper-bound the number $N_{p}(\Z^{n}, r, \vec{0})$ of ``short''
lattice points of length at most~$r$, by some~$Q$. Then, by taking a
random sublattice $\lat \subset \Z^{n}$ of determinant (index)
slightly larger than~$Q$, with noticeable probability none of the
``short'' nonzero vectors will be included in~$\lat$, whereas roughly
$2^{n}/Q$ of the vectors ``close'' to~$\vec{t}$ will be in~$\lat$. So,
as long as $Q = 2^{(1-\Omega(1))n}$, there are sufficiently many
lattice vectors at the desired relative distance from~$\vec{t}$.

Bounds for $N_{p}(\Z^{n}, r, \vec{0})$ were given by Mazo and
Odlyzko~\cite{mazo90:_lattice_points}, by a simple but powerful
technique using the theta function
$\Theta_{p}(\tau) := \sum_{z \in \Z} \exp(-\tau \abs{z}^{p})$.  They
showed (see \cref{prop:short-vec-ub}) that
\begin{equation}
  \label{eq:intro-mazo-odlyzko}
 N_{p}(\Z^{n}, r, \vec{0}) \leq \min_{\tau > 0} \exp(\tau \cdot
  r^{p}) \cdot \Theta_{p}(\tau)^{n} = \parens[\Big]{\min_{\tau > 0}
    \exp(\tau/2^{p}) \cdot \Theta_{p}(\tau)}^{n} \ , 
\end{equation}		
where the equality is by $r=\tfrac12 n^{1/p}$. So, Aggarwal and
Stephens-Davidowitz need
$\min_{\tau > 0} \exp(\tau/2^{p}) \cdot \Theta_{p}(\tau) < 2$, and it
turns out that this is the case for every $p > p_{0} \approx
2.1397$. (They also deal with smaller~$p$ by using a different target
point~$\vec{t}$.)

\paragraph{This work: local density for small relative distance.}

For the $\NP$- and fine-grained hardness of BDD we use the same basic
approach as in~\cite{conf/stoc/AggarwalS18}, but with the different
goal of getting local density for as small of a relative distance
$\alpha < 1$ as we can manage. That is, we still have~$2^{n}$ integral
vectors all at distance $r=\tfrac12 n^{1/p}$ from the
target~$\vec{t} = \tfrac12 \vec{1} \in \R^{n}$, but we want to
``sparsify away'' all the nonzero integral vectors of length less
than~$r/\alpha$. So, we want the right-hand side of the Mazo-Odlyzko
bound (\cref{eq:intro-mazo-odlyzko}) to be at most
$2^{(1-\Omega(1))n}$ for as large of a positive hidden constant as we
can manage. More specifically, for any $p \geq 1$ and $C > 1$ (which
ultimately corresponds to the reduction's rank inefficiency) we can
obtain local density of at least $2^{n/C}$ close vectors at any
relative distance greater than
\[ \alpha^{*}_{p,C} := \inf \set{\alpha^{*} > 0 : \min_{\tau > 0}
    \exp(\tau/(2\alpha^{*})^{p}) \cdot \Theta_{p}(\tau) \leq
    2^{1-1/C}} \ . \] The value of~$\alpha^{*}_{p,C}$ is strictly
decreasing in both~$p$ and~$C$, and for large~$C$ and
$p > p_{1} \approx 4.2773$ it drops below the relative distance of
$1/\sqrt{2}$ approached by the local-density construction
of~\cite{journals/siamcomp/Micciancio00} for~$\ell_{2}$ (and
also~$\ell_{p}$ by norm embeddings.) This is the source of our
improved relative distance for the $\NP$-hardness of BDD in
high~$\ell_{p}$ norms.

We also show that obtaining local density by sparsifying the integer
lattice happens to yield a very simple reduction to BDD from the
\emph{exact} version of CVP, which is how we obtain fine-grained
hardness. Given a CVP instance consisting of a lattice and a target
point, we essentially just take their direct sum with the integer
lattice and the $\tfrac12 \vec{1}$ target (respectively), then
sparsify. (See \cref{lem:cvp-to-st-bdd-transform} and
\cref{thm:st-bdd-to-bdd} for details.) Because this results in the
(sparsified) locally dense lattice having $2^{\Omega(n)}$ close
vectors all \emph{exactly} at the threshold of the BDD promise,
concatenating the CVP instance either keeps the target within the
(slightly weaker) BDD promise, or puts it just outside. This is in
contrast to the prior reduction of~\cite{conf/approx/LiuLM06}, where
the close vectors in the locally dense lattices
of~\cite{journals/siamcomp/Micciancio00} are at various distances from
the target, hence a reduction from \emph{approximate}-CVP with a large
constant factor is needed to put the target outside the BDD
promise. While approximating CVP to within any constant factor is
known to be $\NP$-hard~\cite{journals/jcss/AroraBSS97}, no
fine-grained hardness is known for approximate CVP, except for factors
just slightly larger than one~\cite{aggarwal2019finegrained}.

\subsection{Discussion and Future Work}
\label{sec:discussion}

Our work raises a number of interesting issues and directions for
future research. First, it highlights that there are now two
incomparable approaches for obtaining local density in the~$\ell_{p}$
norm---Micciancio's
construction~\cite{journals/siamcomp/Micciancio00}, and sparsifying
the integer
lattice~\cite{elkies91:_packing_densities,conf/stoc/AggarwalS18}---with
each delivering a better relative distance for certain ranges
of~$p$. For $p \in [1,p_{1} \approx 4.2773]$, Micciancio's
construction (with norm embeddings from~$\ell_{2}$, where applicable)
delivers the better relative distance, which approaches
$\min \set{1/\sqrt[p]{2}, 1/\sqrt{2}}$. Moreover, this is essentially optimal in $\ell_2$, where $1/\sqrt{2}$ is
unachievable due to the Rankin bound,
 which says that in~$\R^{n}$ we can have at most $2n$
subunit vectors with pairwise distances of~$\sqrt{2}$ or more.

A first question, therefore, is whether relative distance less than
$1/\sqrt{2}$ can be obtained for all $p > 2$. We conjecture that this
is true, but can only manage to prove it via sparsification for all
$p > p_{1} \approx 4.2773$. More generally, an important open problem
is to give a unified local-density construction that subsumes both of
the above-mentioned approaches in terms of relative distance, and
ideally in rank efficiency as well. In the other direction, another
important goal is to give lower bounds on the relative distance in
general~$\ell_{p}$ norms. Apart from the Rankin bound, the only bound
we are aware of is the trivial one of $\alpha \geq 1/2$ implied by the
triangle inequality, which is essentially tight for~$\ell_{1}$ and tight for~$\ell_{\infty}$
(as shown by~\cite{journals/siamcomp/Micciancio00} and our work,
respectively).

More broadly, for the BDD relative distance parameter~$\alpha$ there
are three regimes of interest: the local-density regime, where we know
how to prove $\NP$-hardness; the unique-decoding regime
$\alpha < 1/2$; and (at least in some~$\ell_{p}$ norms,
including~$\ell_{2}$) the intermediate regime between them. It would
be very interesting, and would seem to require new techniques, to show
$\NP$-hardness outside the local-density regime. One potential route
would be to devise a \emph{gap amplification} technique for BDD,
analogous to how SVP has been proved to be $\NP$-hard to approximate
to within any constant
factor~\cite{journals/jacm/Khot05,journals/toc/HavivR12,journals/toc/Micciancio12}. Gap
amplification may also be interesting in the absence of
$\NP$-hardness, e.g., for the inverse-polynomial relative distances
used in cryptography. Currently, the only efficient gap amplification
we are aware of is a modest one that decreases the relative distance
by any $(1-1/n)^{O(1)}$ factor~\cite{conf/crypto/LyubashevskyM09}.

A final interesting research direction is related to the \emph{unique}
Shortest Vector Problem (uSVP), where the goal is to find a shortest
nonzero vector~$\vec{v}$ in a given lattice, under the promise that it
is unique (up to sign). More generally, approximate uSVP has the
promise that all lattice vectors not parallel to~$\vec{v}$ are a
certain factor~$\gamma$ as long. It is known that \emph{exact} uSVP is
$\NP$-hard in~$\ell_{2}$~\cite{journals/tcs/KumarS01}, and by known
reductions it is straightforward to show the $\NP$-hardness of
\emph{$2$-approximate} uSVP in~$\ell_{\infty}$.  Can recent techniques
help to prove $\NP$-hardness of $\gamma$-approximate uSVP, for some
constant $\gamma > 1$, in~$\ell_{p}$ for some finite~$p$, or
specifically for~$\ell_{2}$? Do $\NP$-hard approximation factors for
uSVP grow smoothly with~$p$?

\paragraph{Acknowledgments.}

We thank Noah Stephens-Davidowitz for sharing his plot-generating code
from~\cite{conf/stoc/AggarwalS18} with us.

\section{Preliminaries}
\label{sec:prelims}

For any positive integer~$q$, we identify the quotient group
$\Z_{q} = \Z/q\Z$ with some set of distinguished representatives,
e.g., $\set{0, 1, \ldots, q-1}$.  Let $B^{+} := (B^t B)^{-1} B^t$
denote the Moore-Penrose pseudoinverse of a real-valued matrix~$B$
with full column rank. Observe that $B^{+} \vec{v}$ is the unique
coefficient vector~$\vec{c}$ with respect to $B$ of any
$\vec{v} = B \vec{c}$ in the column span of~$B$.

\subsection{Problems with Preprocessing}
\label{sec:preprocessing}

In addition to ordinary computational problems, we are also interested
in (promise) problems with \emph{preprocessing}. In such a problem, an
instance $(x_{P}, x_{Q})$ is comprised of a ``preprocessing''
part~$x_{P}$ and a ``query'' part~$x_{Q}$, and an algorithm is allowed
to perform unbounded computation on the preprocessing part
before receiving the query part.

Formally, a preprocessing problem is a relation
$\Pi = \set{((x_{P}, x_{Q}), y)}$ of instance-solution pairs, where
$\Pi_{\text{inst}} := \set{(x_{P}, x_{Q}) : \exists\; y \text{ s.t. }
  ((x_{P}, x_{Q}), y) \in \Pi}$ is the set of problem instances, and
$\Pi_{(x_{P},x_{Q})} := \set{y : ((x_{P}, x_{Q}), y) \in \Pi}$ is the
set of solutions for any particular instance $(x_{P}, x_{Q})$. If
every instance $(x_{P},x_{Q}) \in \Pi_{\text{inst}}$ has exactly one
solution that is either YES or NO, then~$\Pi$ is called a decision
problem.

\begin{definition}
  \label{def:preprocessing-algorithm}
  A \emph{preprocessing algorithm} is a pair $(P,Q)$ where~$P$ is a
  (possibly randomized) function representing potentially unbounded
  computation, and~$Q$ is an algorithm. The execution of $(P,Q)$ on an
  input $(x_{P}, x_{Q})$ proceeds in two phases:
  \begin{itemize}[itemsep=0pt]
  \item first, in the preprocessing phase, $P$ takes~$x_{P}$ as input
    and produces some preprocessed output~$\sigma$;
  \item then, in the query phase, $Q$ takes both~$\sigma$ and~$x_{Q}$ as
    input and produces some ultimate output.
  \end{itemize}
  The running time~$T$ of the algorithm is defined to be the time used
  in the query phase alone, and is considered as a function of the
  total input length $\abs{x_{P}}+\abs{x_{Q}}$. The length of the
  preprocessed output is defined as $A = \abs{\sigma}$, and is also
  considered as a function of the total input length. Note that
  without loss of generality, $A \leq T$.
	
  If $(P,Q)$ is deterministic, we say that it solves preprocessing
  problem~$\Pi$ if $Q(P(x_{P}), x_{Q}) \in \Pi_{(x_{P},x_{Q})}$ for
  all~$(x_{P}, x_{Q}) \in \Pi_{\text{inst}}$. If $(P,Q)$ is
  potentially randomized, we say that it solves~$\Pi$ if
  \[ \Pr[Q(P(x_{P}), x_{Q}) \in \Pi_{(x_{P},x_{Q})}] \geq \frac23 \ \]
  for all $(x_{P},x_{Q}) \in \Pi_{\text{inst}}$, where the probability
  is taken over the random coins of both~$P$ and~$Q$.\footnote{Note
    that it could be the case that some preprocessed outputs fail to
    make the query algorithm output a correct answer on some, or even
    all, query inputs.}
\end{definition}

As shown below using a routine quantifier-swapping argument (as in
Adleman's Theorem~\cite{DBLP:conf/focs/Adleman78}), it turns out that
for $\NP$ relations and decision problems, any randomized
preprocessing algorithm can be derandomized if the length of the query
input~$x_{Q}$ is polynomial in the length of the preprocessing
input~$x_{P}$. So for convenience, in this work we allow for
randomized algorithms, only switching to deterministic ones for our
ultimate hardness theorems.

\begin{lemma}
  \label{lem:weak-strong-equivalent}
  Let preprocessing problem~$\Pi$ be an $\NP$ relation or a decision
  problem for which $\abs{x_{Q}} = \poly(\abs{x_{P}})$ for all
  $(x_{P},x_{Q}) \in \Pi_{\text{inst}}$. If $\Pi$ has a randomized
  $T$-time algorithm, then it has a deterministic
  $T \cdot \poly(\abs{x_{P}}+\abs{x_{Q}})$-time algorithm with
  $T \cdot \poly(\abs{x_{P}}+\abs{x_{Q}})$-length preprocessed output.
\end{lemma}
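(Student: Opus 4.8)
The plan is to apply the standard quantifier-swapping argument behind Adleman's Theorem, adapted to the preprocessing setting. Suppose $(P,Q)$ is a randomized $T$-time preprocessing algorithm for $\Pi$. The key observation is that a single run of $(P,Q)$ already produces, with probability at least $2/3$ over the coins of $P$ and $Q$, a correct answer on a \emph{fixed} query $x_Q$. What we want instead is a fixed (deterministic) preprocessed output that is correct on \emph{all} queries $x_Q$ with $\abs{x_Q} = \poly(\abs{x_P})$.

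First I would amplify success probability: by running $(P,Q)$ independently $k = \poly(\abs{x_P})$ times on the same preprocessing input (running the preprocessing phase $k$ times to get outputs $\sigma_1,\ldots,\sigma_k$, and likewise the query phase) and taking the majority answer, we drive the failure probability on any fixed instance $(x_P,x_Q)$ below $2^{-\abs{x_P}-\abs{x_Q}-1}$. (For an $\NP$ relation rather than a decision problem, amplification instead takes the first among $k$ runs that produces a valid witness, which can be checked in polynomial time; correctness of a candidate witness is verifiable, so this works with the same exponentially small failure probability.) This costs a $\poly(\abs{x_P}+\abs{x_Q})$ factor in running time and a corresponding blowup in the length of the concatenated preprocessed output $\sigma = (\sigma_1,\ldots,\sigma_k)$, which is fine.

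Next I would fix the preprocessing input $x_P$ and union-bound over all query inputs $x_Q$ with $\abs{x_Q} = \poly(\abs{x_P})$. There are at most $2^{\poly(\abs{x_P})+1}$ such strings, and since the failure probability on each is below $2^{-\abs{x_P}-\abs{x_Q}-1} \le 2^{-\poly(\abs{x_P})-1}$ after choosing the amplification parameter $k$ appropriately (large enough polynomial in $\abs{x_P}$ to beat the number of queries), the probability that the randomized preprocessing output fails on \emph{some} admissible query is strictly less than $1$. Hence there exists a fixing of the random coins of the amplified preprocessing phase that yields a deterministic preprocessed string $\sigma^\ast = \sigma^\ast(x_P)$ correct on all admissible queries. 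The deterministic preprocessing function $P'$ outputs $\sigma^\ast(x_P)$ (this is an unbounded computation, which is allowed), and $Q'$ is the amplified, derandomized query algorithm — its internal randomness can also be fixed by the same union bound, or folded into $\sigma^\ast$. The running time and preprocessed-output length are both $T \cdot \poly(\abs{x_P}+\abs{x_Q})$ as claimed.

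The only mildly delicate point is bookkeeping the polynomial relationship $\abs{x_Q} = \poly(\abs{x_P})$ to ensure the number of admissible queries, $2^{O(\abs{x_Q})} = 2^{\poly(\abs{x_P})}$, is dominated by the amplified success probability $1 - 2^{-\Omega(\poly(\abs{x_P}))}$; this just requires choosing the amplification exponent $k$ to be a sufficiently large polynomial in $\abs{x_P}$ relative to the exponent hidden in $\abs{x_Q} = \poly(\abs{x_P})$. There is no real obstacle here — the argument is entirely routine — which is why the lemma is stated as following from a "routine quantifier-swapping argument."
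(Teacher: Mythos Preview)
Your proposal is correct and follows essentially the same approach as the paper: amplify the success probability by independent repetition (majority for decision, first verified witness for $\NP$ relations) to drive the per-instance error below $2^{-\poly(|x_P|)}$, union-bound over the at most $2^{\poly(|x_P|)}$ admissible queries for a fixed $x_P$, and conclude that some fixing of the coins of both $P$ and $Q$ works for all queries (the paper likewise folds $Q$'s coins into the preprocessed output). The only cosmetic difference is that the paper phrases amplification as a single abstract step to error $<\exp(-q(|x_P|))$ rather than spelling out the $k$-fold repetition, but the content is identical.
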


\begin{proof}
  Let $q(\cdot)$ be a polynomial for which
  $\abs{x_{Q}} \leq q(\abs{x_{P}})$ for all
  $(x_{P},x_{Q}) \in \Pi_{\text{inst}}$. Let $(P,Q)$ be a randomized
  $T$-time algorithm for~$\Pi$, which by standard repetition
  techniques we can assume has probability strictly less than
  $\exp(-q(\abs{x_{P}}))$ of being incorrect on any
  $(x_{P},x_{Q}) \in \Pi_{\text{inst}}$, with only a
  $\poly(\abs{x_{P}}+\abs{x_{Q}})$-factor overhead in the running time
  and preprocessed output length. Fix some arbitrary~$x_{P}$. Then by
  the union bound over all~$(x_{P},x_{Q}) \in \Pi_{\text{inst}}$ and
  the hypothesis, we have
  \[ \Pr[\exists\; (x_{P},x_{Q}) \in \Pi_{\text{inst}} :
    Q(P(x_{P}),x_{Q}) \not\in \Pi_{(x_{P},x_{Q})}] < 1. \] So, there
  exist coins for~$P$ and~$Q$ for which
  $Q(P(x_{P}), x_{Q}) \in \Pi_{(x_{P},x_{Q})}$ for all
  $(x_{P}, x_{Q}) \in \Pi_{\text{inst}}$. By fixing these coins we
  make~$P$ a deterministic function of~$x_{P}$, and we include the
  coins for~$Q$ along with the preprocessed output~$P(x_{P})$, thus
  making~$Q$ deterministic as well. The resulting deterministic
  algorithm solves~$\Pi$ with the claimed resources, as needed.
\end{proof}
	
\paragraph{Reductions for preprocessing problems.}

We need the following notions of reductions for preprocessing
problems. The following generalizes Turing reductions and Cook
reductions (i.e., polynomial-time Turing reductions).

\begin{definition}
  \label{def:preprocessing-cook-reduction}
  A \emph{Turing reduction} from one preprocessing problem~$X$ to
  another one~$Y$ is a pair of algorithms $(R_{P}, R_{Q})$ satisfying
  the following properties: $R_{P}$ is a (potentially randomized)
  function with access to an oracle~$P$, whose output length is
  polynomial in its input length; $R_{Q}$ is an algorithm with access
  to an oracle~$Q$; and if $(P, Q)$ solves problem~$Y$, then
  $(R_{P}^{P}, R_{Q}^{Q})$ solves problem~$X$. Additionally, it is a
  \emph{Cook reduction} if~$R_{Q}$ runs in time polynomial in the
  total input length of~$R_{P}$ and~$R_{Q}$.
\end{definition}
Similarly, the following generalizes mapping reductions and Karp
reductions (i.e., polynomial-time mapping reductions) for decision
problems.

\begin{definition}
  \label{def:preprocessing-mapping-reduction}
  A \emph{mapping reduction} from one preprocessing decision
  problem~$X$ to another one~$Y$ is a pair $(R_{P}, R_{Q})$ satisfying
  the following properties: $R_{P}$ is a deterministic function whose
  output length is polynomial in its input length; $R_{Q}$ is a
  deterministic algorithm; and for any YES (respectively, NO) instance
  $(x_{P}, x_{Q})$ of~$X$, the output pair $(y_{P}, y_{Q})$ is a YES
  (resp., NO) instance of~$Y$, where $(y_{P}, y_{Q})$ are defined as
  follows:
  \begin{itemize}[itemsep=0pt]
  \item first, $R_P$ takes~$x_P$ as input and outputs some
    $(\sigma', y_P)$, where~$\sigma'$ is some ``internal''
    preprocessed output;
  \item then, $R_Q$ takes $(\sigma', x_Q)$ as input and outputs
    some~$y_Q$.
  \end{itemize}
  Additionally, it is a \emph{Karp reduction} if~$R_{Q}$ runs in time
  polynomial in the total input length of~$R_{P}$ and~$R_{Q}$.
\end{definition}

It is straightforward to see that if~$X$ mapping reduces to~$Y$, and
there is a deterministic polynomial-time preprocessing algorithm
$(P_{Y},Q_{Y})$ that solves~$Y$, then there is also one
$(P_{X},Q_{X})$ that solves~$X$, which works as follows:
\begin{enumerate}[itemsep=0pt]
\item the preprocessing algorithm~$P_{X}$, given a preprocessing
  input~$x_{p}$, first computes $(\sigma', y_{P}) = R_{P}(x_{P})$,
  then computes $\sigma_{Y} = P_{Y}(y_{P})$ and outputs
  $\sigma_{X} = (\sigma', \sigma_{Y})$;
\item the query algorithm $Q_{X}$, given
  $\sigma_{X} = (\sigma', \sigma_{Y})$ and a query input~$x_{Q}$,
  computes $y_{Q} = R_{Q}(\sigma', x_{Q})$ and finally outputs
  $Q_{Y}(\sigma_{Y}, y_{Q})$.
\end{enumerate}

\subsection{Lattices}

A \emph{lattice} is the set of all integer linear combinations of some
linearly independent vectors $\vec{b}_1, \ldots, \vec{b}_n$.  It is
convenient to arrange these vectors as the columns of a
matrix. Accordingly, we define a \emph{basis}
$B = (\vec{b}_1, \ldots, \vec{b}_n) \in \R^{d \times n}$ to be a
matrix with linearly independent columns, and the lattice generated by
basis~$B$ as
\[
  \lat(B) := \set[\Big]{\sum_{i=1}^n a_i \vec{b}_i : a_1, \ldots, a_n \in
    \Z} \ .
\]

Let $\mathcal{B}_p^d$ denote the centered unit $\ell_p$ ball in $d$ dimensions.
Given a lattice $\lat \subset \R^d$ of rank $n$, for $1 \leq i \leq n$ let
\[
\sm{p}_i(\lat) := \inf \set{r > 0 : \dim(\lspan(r \cdot \mathcal{B}_p^d \cap \lat)) \geq i}
\]
denote the $i$th successive minimum of $\lat$ with respect to the $\ell_p$ norm.

We denote the $\ell_p$ distance of a vector $\vec{t}$ to a lattice $\lat$ as
\[
\dist_p(\vec{t}, \lat) := \min_{\vec{v} \in \lat} \norm{\vec{v} - \vec{t}}_p \ .
\]

\subsection{Bounded Distance Decoding (with Preprocessing)}

The primary computational problem that we study in this work is the
Bounded Distance Decoding Problem (BDD), which is a version of the
Closest Vector Problem (CVP) in which the target vector is promised to
be relatively close to the lattice.

\begin{definition}
  \label{def:bdd}
  For $1 \leq p \leq \infty$ and $\alpha = \alpha(n) > 0$, the
  \emph{$\alpha$-Bounded Distance Decoding problem in the
    $\ell_p$ norm} ($\BDD_{p,\alpha}$) is the (search) promise problem
  defined as follows. The input is (a basis of) a rank-$n$ lattice
  $\lat$ and a target vector $\vec{t}$ satisfying
  $\dist_p(\vec{t}, \lat) \leq \alpha(n) \cdot \sm{p}_1(\lat)$. The
  goal is to output a lattice vector $\vec{v} \in \lat$ that satisfies
  $\norm{\vec{v} - \vec{t}}_p \leq \alpha(n) \cdot \sm{p}_{1}(\lat)$.

  The preprocessing (search) promise problem $\BDDP_{p,\alpha}$ is
  defined analogously, where the preprocessing input is (a basis of)
  the lattice, and the query input is the target~$\vec{t}$.
\end{definition}

We note that in some works, $\BDD$ is defined to have the goal of
finding a~$\vec{v} \in \lat$ such that
$\norm{\vec{v}-\vec{t}}_{p} = \dist_{p}(\vec{t}, \lat)$. This
formulation is clearly no easier than the one defined above. So, our
hardness theorems, which are proved for the definition above,
immediately apply to the alternative formulation as well.

We also remark that for $\alpha < 1/2$, the promise ensures that there
is a \emph{unique} vector $\vec{v}$ satisfying
$\norm{\vec{v} - \vec{t}}_p \leq \alpha \cdot
\sm{p}_1(\lat)$. However, $\BDD$ is still well defined for
$\alpha \geq 1/2$, i.e., above the unique-decoding radius. As in prior
work, our hardness results for $\BDD_{p,\alpha}$ are limited to this
regime.

To the best of our knowledge, essentially the only previous study of the
$\NP$-hardness of $\BDD$ is due to~\cite{conf/approx/LiuLM06}, which
showed the following
result.\footnote{Additionally,~\cite{conf/coco/DadushRS14} gave a reduction from $\CVP$ to $\BDD_{2, \alpha}$ but only for some $\alpha > 1$. Also,~\cite{conf/stoc/Peikert09,conf/crypto/LyubashevskyM09}
  gave a reduction from $\GapSVP_{\gamma}$ to $\BDD$, but only for
  large $\gamma = \gamma(n)$ for which $\GapSVP$ is not known to be
  $\NP$-hard.}

\begin{theorem}[{{\cite[Corollaries 1 and 2]{conf/approx/LiuLM06}}}]
  \label{thm:llm-bdd-bound}
  For any $p \in [1,\infty)$ and $\alpha > 1/2^{1/p}$, there is no
  polynomial-time algorithm for $\BDD_{p,\alpha}$ (respectively, with
  preprocessing) unless $\NP \subseteq \RP$ (resp., unless
  $\NP \subseteq \PPoly$).
\end{theorem}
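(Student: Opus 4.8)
The plan is to reduce from a suitable homogeneous variant of the approximate Closest Vector Problem $\mathrm{GapCVP}_\gamma$ (which is $\NP$-hard for every constant $\gamma > 1$, as is its preprocessing analog $\mathrm{GapCVPP}_\gamma$), using Micciancio's locally dense lattices~\cite{journals/siamcomp/Micciancio00} as a gadget. The key fact to invoke is that for every $p \in [1,\infty)$ and every $\alpha_0 > 2^{-1/p}$, from a $\mathrm{GapCVP}_\gamma$ instance one obtains, in polynomial time (using randomness), a rank-$n$ lattice~$\lat$, a target~$\vec t$, and a value $r > 0$ fixed by the construction, such that $\sm{p}_1(\lat) = r$; if the instance is a YES instance then $\dist_p(\vec t, \lat) \leq \alpha_0 r$ (indeed there are $\exp(\Omega(n))$ lattice points this close to~$\vec t$, though a single one suffices here); and if it is a NO instance then $\dist_p(\vec t, \lat) > \gamma' r$, where $\gamma' \to \infty$ as $\gamma \to \infty$. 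Importantly, $\lat$ and $r$ are determined by the lattice of the $\mathrm{GapCVP}$ instance alone --- the homogeneous promise being exactly what lets one pin $\sm{p}_1(\lat)$ to~$r$ --- while $\vec t$ also depends on the instance's target.

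Given $\alpha > 2^{-1/p}$, I would fix $\alpha_0$ with $2^{-1/p} < \alpha_0 < \alpha$ and pick the constant $\gamma$ large enough that the resulting $\gamma' > \alpha$. Then $(\lat, \vec t)$ is an instance of $\BDD_{p,\alpha}$ precisely when the $\mathrm{GapCVP}_\gamma$ instance is a YES instance: in the YES case $\dist_p(\vec t, \lat) \leq \alpha_0 r < \alpha r = \alpha \cdot \sm{p}_1(\lat)$, so the promise holds and a lattice vector within the required distance of~$\vec t$ exists; in the NO case $\dist_p(\vec t, \lat) > \gamma' r > \alpha r = \alpha \cdot \sm{p}_1(\lat)$, so \emph{no} lattice vector lies within distance $\alpha \cdot \sm{p}_1(\lat)$ of~$\vec t$. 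Hence, to decide $\mathrm{GapCVP}_\gamma$ using a polynomial-time $\BDD_{p,\alpha}$ algorithm, one runs it on $(\lat, \vec t)$ and accepts if and only if it returns a lattice vector $\vec v \in \lat$ with $\norm{\vec v - \vec t}_p \leq \alpha r$ (the value~$r$ being known): on a YES instance $(\lat,\vec t)$ obeys the $\BDD$ promise, so a correct algorithm outputs such a~$\vec v$; on a NO instance no such~$\vec v$ exists, so the check fails whatever the algorithm does. Since $\mathrm{GapCVP}_\gamma$ is $\NP$-hard, this places $\NP$ in $\RP$; the one-sided randomness comes from the gadget construction, which by repetition we may assume succeeds except with negligible probability.

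For the preprocessing statement, I would exploit that~$\lat$ and~$r$ depend only on the $\mathrm{GapCVPP}$ preprocessing input (the lattice): the reduction maps that input to the $\BDDP$ preprocessing input~$\lat$, and maps the $\mathrm{GapCVPP}$ query input (the target) to the $\BDDP$ query input~$\vec t$, composing with any $\BDDP$ preprocessing algorithm exactly as in the composition of mapping reductions described just after \cref{def:preprocessing-mapping-reduction}. Using the $\NP$-hardness of $\mathrm{GapCVPP}_\gamma$ and hardwiring good coins for the gadget construction into the non-uniform advice (cf.~\cref{lem:weak-strong-equivalent}), a polynomial-time $\BDDP_{p,\alpha}$ algorithm would yield $\NP \subseteq \PPoly$.

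The main obstacle --- essentially the entire content of the theorem --- is the gadget itself: Micciancio's construction of locally dense integer lattices whose ``close'' vectors sit at $\ell_p$ relative distance arbitrarily close to $2^{-1/p}$ while the minimum distance is pinned to a known value~$r$, which rests on the number-theoretic analysis of $\ell_p$ balls in the integer lattice together with BCH codes~\cite{journals/siamcomp/Micciancio00}. One conceptual point is worth flagging: because the close vectors occupy a \emph{range} of distances up to $\alpha_0 r$, rather than all landing exactly at the $\BDD$ threshold, the $\mathrm{GapCVP}$ gap~$\gamma$ genuinely must be a large constant (not gap~$1$) in order to force the NO-case distance past $\alpha r$; it is precisely this slack that the integer-lattice sparsification technique of the present paper later removes, by arranging all $\exp(\Omega(n))$ close vectors to sit \emph{at} distance~$r$ from the target.
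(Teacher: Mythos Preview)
This theorem is cited from~\cite{conf/approx/LiuLM06} and the paper does not give its own proof. Your sketch accurately reconstructs the approach the paper attributes to~\cite{conf/approx/LiuLM06} in \cref{sec:technical}: reduce from constant-factor approximate CVP using Micciancio's locally dense lattices~\cite{journals/siamcomp/Micciancio00}, with the large constant gap needed precisely because the gadget's close vectors lie at a \emph{range} of distances from the center rather than all at the threshold---a point you correctly flag and which the paper itself highlights as the main contrast with its own sparsification-based reduction.
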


Regev and Rosen~\cite{conf/stoc/RegevR06} used norm embeddings to show
that almost any lattice problem is at least as hard in the $\ell_p$
norm, for any $p \in [1,\infty]$, as it is in the~$\ell_2$ norm, up to
an arbitrarily small constant-factor loss in the approximation factor.
In other words, they essentially showed that $\ell_{2}$ is the
``easiest'' $\ell_{p}$ norm for lattice problems. (In addition, their
reduction preserves the rank of the lattice.) Based on
this,~\cite{conf/approx/LiuLM06} observed the following corollary,
which is an improvement on the factor~$\alpha$ from
\cref{thm:llm-bdd-bound} for all $p > 2$.

\begin{theorem}[{{\cite[Corollary 3]{conf/approx/LiuLM06}}}]
  \label{thm:llm-bdd-bound-norm-embeddings}
  For any $p \in [1,\infty)$ and $\alpha > 1/\sqrt{2}$, there is no
  polynomial-time algorithm for $\BDD_{p,\alpha}$ (respectively, with
  preprocessing) unless $\NP \subseteq \RP$ (resp., unless
  $\NP \subseteq \PPoly$).
\end{theorem}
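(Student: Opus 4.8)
The plan is to obtain this statement by composing \cref{thm:llm-bdd-bound} in the case $p = 2$ with the norm-embedding reduction of Regev and Rosen~\cite{conf/stoc/RegevR06}. Taking $p = 2$ in \cref{thm:llm-bdd-bound} already shows that for every constant $\alpha_0 > 1/2^{1/2} = 1/\sqrt{2}$, there is no polynomial-time algorithm for $\BDD_{2,\alpha_0}$ (resp., $\BDDP_{2,\alpha_0}$) unless $\NP \subseteq \RP$ (resp., $\NP \subseteq \PPoly$). So it suffices to transfer this hardness from the $\ell_2$ norm to an arbitrary $\ell_p$ norm while losing only an arbitrarily small constant factor in the relative distance, which is exactly what~\cite{conf/stoc/RegevR06} provides.

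Concretely, for any target $p \in [1,\infty)$ and any $\delta > 0$, the Regev--Rosen reduction maps a rank-$n$ lattice $\lat = \lat(B) \subset \R^n$ and target $\vec{t}$ to a rank-$n$ lattice $\lat' = \lat(B') \subset \R^{d}$ and target $\vec{t}'$ (with $B'$ computed from $B$ alone, independently of $\vec{t}$) such that, with high probability, the $\ell_p$ norms of all vectors in the relevant finite set---the $\ell_2$-shortest nonzero vectors of $\lat$ and the differences $\vec{v} - \vec{t}$ for the $2^{O(n)}$ lattice points $\vec{v}$ near $\vec{t}$---are within a $(1 \pm \delta)$ factor of their $\ell_2$ norms under the embedding. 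Hence $\sm{p}_1(\lat') \geq (1-\delta)\,\sm{2}_1(\lat)$ and $\dist_p(\vec{t}', \lat') \leq (1+\delta)\,\dist_2(\vec{t}, \lat)$, and any $\vec{v}' \in \lat'$ with $\norm{\vec{v}' - \vec{t}'}_p$ small pulls back to a $\vec{v} \in \lat$ with $\norm{\vec{v} - \vec{t}}_2$ larger by at most a $1/(1-\delta)$ factor. Consequently a solver for $\BDD_{p, \alpha'}$ on $(\lat', \vec{t}')$ yields a solver for $\BDD_{2, \alpha_0}$ on $(\lat, \vec{t})$ as long as $\alpha' = \alpha_0 \cdot (1 + O(\delta))$.

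To conclude, given $p \in [1,\infty)$ and a constant $\alpha > 1/\sqrt{2}$, fix a constant $\alpha_0$ with $1/\sqrt{2} < \alpha_0 < \alpha$ and then a constant $\delta > 0$ small enough that $\alpha_0 \cdot (1 + O(\delta)) \leq \alpha$. Composing the (randomized) $\NP$-hardness reduction to $\BDD_{2,\alpha_0}$ from \cref{thm:llm-bdd-bound} with the Regev--Rosen embedding to $\BDD_{p,\alpha}$ gives the randomized reduction needed for the non-preprocessing claim; in the preprocessing case, since the embedding acts on the lattice (basis) independently of the target, it can be folded into the preprocessing phase, yielding a non-uniform reduction and the conclusion $\NP \subseteq \PPoly$. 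The one point needing care is that the BDD \emph{promise} must survive the embedding, i.e., the norm distortion must be controlled \emph{simultaneously} on the witnesses for $\sm{2}_1(\lat)$ and on all lattice points near $\vec{t}$; this follows from a union bound over the $2^{O(n)}$ relevant vectors against the Regev--Rosen per-vector concentration guarantee, or alternatively by appending $\vec{t}$ as an extra basis vector so that distances in the coset $\lat - \vec{t}$ are handled uniformly with ordinary lattice vectors.
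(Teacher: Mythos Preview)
Your proposal is correct and matches the paper's approach exactly: the paper does not give a formal proof of this statement but simply cites it as a known corollary, explaining in the preceding paragraph that it follows by combining the $p=2$ case of \cref{thm:llm-bdd-bound} with the Regev--Rosen norm embeddings from~$\ell_2$ to~$\ell_p$~\cite{conf/stoc/RegevR06}. Your write-up fleshes out precisely this composition.
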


\cref{fig:bdd-hardness-bounds} shows the bounds from
\cref{thm:llm-bdd-bound,thm:llm-bdd-bound-norm-embeddings}
together with the new bounds achieved in this work as a function
of~$p$.

\subsection{Sparsification}
\label{sec:sparsification}

A powerful idea, first used in the context of hardness proofs for
lattice problems in~\cite{journals/jcss/Khot06}, is that of random
lattice sparsification. Given a lattice $\lat$ with basis $B$, we can
construct a random sublattice $\lat' \subseteq \lat$ as
\[
\lat' = \set{\vec{v} \in \lat : \iprod{\vec{z}, B^{+}\vec{v}} = 0 \pmod*{q}}
\]
for uniformly random $\vec{z} \in \Z_q^n$, where $q$ is a suitably
chosen prime.

\begin{lemma}
  \label{lem:sparsification-upper-bound}
  Let $q$ be a prime and let
  $\vec{x}_1, \ldots, \vec{x}_N \in \Z_q^n \setminus
  \set{\vec{0}}$ be arbitrary. Then
  \[
    \Pr_{\vec{z} \gets \Z_{q}^{n}}[\exists\; i \in [N] \text{ such
      that } \iprod{\vec{z}, \vec{x}_i} = 0 \pmod*{q}] \leq \frac{N}{q}
    \ .
  \]
\end{lemma}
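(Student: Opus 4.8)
The plan is a routine union bound, preceded by a one-line computation of the ``collision'' probability for a single vector; the only substantive point is that primality of~$q$ makes~$\Z_q$ a field. First I would fix an arbitrary index $i \in [N]$ and consider the $\Z_q$-linear functional $\phi_i \colon \Z_q^n \to \Z_q$ defined by $\phi_i(\vec{z}) = \iprod{\vec{z}, \vec{x}_i}$, with all arithmetic in~$\Z_q$. Since $q$ is prime, $\Z_q$ is a field and $\Z_q^n$ is an $n$-dimensional vector space over it; because $\vec{x}_i \neq \vec{0}$, some coordinate of~$\vec{x}_i$ is a unit in~$\Z_q$, so $\phi_i$ is a nonzero functional and hence surjective. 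Its kernel is therefore a subspace of dimension $n-1$, containing exactly $q^{n-1}$ vectors, so for $\vec{z}$ drawn uniformly from~$\Z_q^n$ I would conclude
\[
  \Pr_{\vec{z} \gets \Z_q^n}\bigl[\iprod{\vec{z}, \vec{x}_i} = 0 \pmod{q}\bigr] = \frac{\abs{\ker \phi_i}}{\abs{\Z_q^n}} = \frac{q^{n-1}}{q^n} = \frac{1}{q} \ .
\]

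Then I would apply the union bound over all $i \in [N]$ to obtain
\[
  \Pr_{\vec{z} \gets \Z_q^n}\bigl[\exists\; i \in [N] : \iprod{\vec{z}, \vec{x}_i} = 0 \pmod{q}\bigr] \leq \sum_{i=1}^{N} \frac{1}{q} = \frac{N}{q} \ ,
\]
which is exactly the claimed bound. There is essentially no obstacle here; the one place to be careful is the use of primality of~$q$, which is what guarantees that a nonzero linear functional on~$\Z_q^n$ has precisely a $1/q$ fraction of the space in its kernel (a statement that can fail when $\Z_q$ is not a field). The hypothesis $\vec{x}_i \neq \vec{0}$ is likewise essential, since it is exactly what ensures each $\phi_i$ is nonzero and hence surjective.
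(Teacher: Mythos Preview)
Your proof is correct and follows essentially the same approach as the paper: compute $\Pr[\iprod{\vec{z},\vec{x}_i}=0]=1/q$ for each nonzero $\vec{x}_i$ and apply the union bound. The paper's proof is a one-liner stating exactly this, while you have spelled out the kernel-counting justification for the $1/q$ probability in more detail.
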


\begin{proof}
  We have $\Pr[\iprod{\vec{z}, \vec{x}_i} = 0] = 1/q$ for each
  $\vec{x}_i$, and the claim follows by the union bound.
\end{proof}
The following corollary is immediate.

\begin{corollary}
  \label{cor:sparsification-min-dist-ub}
  Let $q$ be a prime and $\lat$ be a lattice of rank $n$ with
  basis~$B$.  Then for all $r > 0$ and all $p \in [1,\infty]$,
  \[
    \Pr_{\vec{z} \gets \Z_{q}^{n}}[\sm{p}_1(\lat') < r] \leq
    \frac{N^{o}_p(\lat \setminus \set{\vec{0}}, r, \vec{0})}{q} \ ,
  \]
  where
  $\lat' = \set{\vec{v} \in \lat : \iprod{\vec{z}, B^{+}\vec{v}}
    = 0 \pmod*{q}}$.
\end{corollary}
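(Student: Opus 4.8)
The plan is to deduce this directly from \cref{lem:sparsification-upper-bound} by passing from each short nonzero vector of~$\lat$ to its integer coordinate vector modulo~$q$, and then invoking the union bound already packaged in that lemma. Concretely, I would argue that the event $\sm{p}_1(\lat') < r$ is literally the event that the random $\vec{z}$ is orthogonal mod~$q$ to the coordinate vector of some short nonzero lattice vector, and that there are only $N^{o}_p(\lat \setminus \set{\vec{0}}, r, \vec{0})$ such vectors.

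First, since $\lat' \subseteq \lat$ is itself a lattice (hence discrete), $\sm{p}_1(\lat') < r$ holds precisely when there is some $\vec{v} \in \lat' \setminus \set{\vec{0}}$ with $\norm{\vec{v}}_p < r$; this matches the strict (``open'') count in $N := N^{o}_p(\lat \setminus \set{\vec{0}}, r, \vec{0}) = \abs{V}$, where $V := \set{\vec{v} \in \lat \setminus \set{\vec{0}} : \norm{\vec{v}}_p < r}$. Every short nonzero vector of $\lat'$ lies in~$V$, so it remains to bound the probability that the random~$\vec{z}$ ``captures'' at least one element of~$V$. For each $\vec{v} \in V$, set $\vec{x}_{\vec{v}} := B^{+}\vec{v} \in \Z^n \setminus \set{\vec{0}}$ (its nonzero, integral coordinate vector) and let $\bar{\vec{x}}_{\vec{v}} \in \Z_q^n$ be its reduction mod~$q$. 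By the very definition of~$\lat'$, $\vec{v} \in \lat'$ iff $\iprod{\vec{z}, \vec{x}_{\vec{v}}} \equiv \iprod{\vec{z}, \bar{\vec{x}}_{\vec{v}}} \equiv 0 \pmod{q}$. Hence $\set{\sm{p}_1(\lat') < r} = \set{\exists\, \vec{v} \in V : \iprod{\vec{z}, \bar{\vec{x}}_{\vec{v}}} = 0 \bmod q}$, and applying \cref{lem:sparsification-upper-bound} to the $N$ vectors $(\bar{\vec{x}}_{\vec{v}})_{\vec{v} \in V}$ yields the bound $N/q$, as claimed.

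The one place needing care — and the sole reason the corollary is not completely immediate — is that \cref{lem:sparsification-upper-bound} requires its inputs to be \emph{nonzero} in $\Z_q^n$, i.e., we need $\vec{v} \notin q\lat$ for all $\vec{v} \in V$. I would dispatch this with a trivial-case split: if $N \geq q$ then $N/q \geq 1$ and the bound holds vacuously, so assume $N < q$. If some $\vec{v} \in V$ lay in $q\lat$, then $\vec{w} := \vec{v}/q \in \lat \setminus \set{\vec{0}}$ would satisfy $\norm{\vec{w}}_p = \norm{\vec{v}}_p/q < r/q$, and then $\vec{w}, 2\vec{w}, \ldots, q\vec{w}$ would be $q$ distinct nonzero vectors of $\lat$ of $\ell_p$-norm $< r$, forcing $N \geq q$ — a contradiction. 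So in the remaining case all $\bar{\vec{x}}_{\vec{v}}$ are nonzero, \cref{lem:sparsification-upper-bound} applies, and the proof concludes. Nothing above uses finiteness of~$p$, so $p = \infty$ is handled identically; and in the intended applications one takes $q$ much larger than~$N$, so this edge case never even arises. I do not expect any real obstacle here — the whole argument is a bookkeeping translation between lattice vectors and their mod-$q$ coordinates plus one union bound.
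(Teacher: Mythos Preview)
Your proposal is correct and takes essentially the same approach as the paper, which simply declares the corollary ``immediate'' from \cref{lem:sparsification-upper-bound}. In fact you are more careful than the paper: you explicitly handle the hypothesis that the coordinate vectors be nonzero in~$\Z_q^n$, via the clean $N \geq q$ versus $N < q$ case split, whereas the paper glosses over this point entirely.
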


\begin{theorem}[{{\cite[Theorem~3.1]{conf/soda/Stephens-Davidowitz16}}}]
  \label{thm:inhom-sparsification}
  For any lattice $\lat$ of rank~$n$ with basis $B$, prime~$q$, and
  lattice vectors $\vec{x}, \vec{y}_1, \ldots, \vec{y}_N \in \lat$
  such that $B^+ \vec{x} \neq B^+ \vec{y}_i \pmod*{q}$ for all
  $i \in [N]$, we have
  \[
    \frac{1}{q} - \frac{N}{q^2} - \frac{N}{q^{n-1}} \leq \Pr_{\vec{z},
      \vec{c} \gets \Z_{q}^{n}}[\iprod{\vec{z}, B^+\vec{x} + \vec{c}}
    = 0 \pmod*{q} \wedge \iprod{\vec{z}, B^+\vec{y}_i + \vec{c}} \neq
    0 \pmod*{q} \; \forall\; i \in [N]] \leq \frac{1}{q} +
    \frac{1}{q^n} \ .
  \]
\end{theorem}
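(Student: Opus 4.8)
The plan is to prove this inhomogeneous sparsification bound by analyzing, for uniformly random $\vec{z}, \vec{c} \in \Z_q^n$, the joint distribution of the ``marker'' event $E_0 : \iprod{\vec{z}, B^+\vec{x} + \vec{c}} = 0 \pmod q$ together with the ``avoidance'' events $E_i : \iprod{\vec{z}, B^+\vec{y}_i + \vec{c}} = 0 \pmod q$ for $i \in [N]$. Writing $\vec{a} := B^+\vec{x}$ and $\vec{a}_i := B^+\vec{y}_i$, the hypothesis is precisely that $\vec{a} \not\equiv \vec{a}_i \pmod q$ for all $i$, i.e. $\vec{a} - \vec{a}_i \neq \vec{0}$ in $\Z_q^n$. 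The quantity we want to bound is $\Pr[E_0 \wedge \overline{E_1} \wedge \cdots \wedge \overline{E_N}] = \Pr[E_0] - \Pr[E_0 \wedge (E_1 \vee \cdots \vee E_N)]$, and then bounding the union $E_1 \vee \cdots \vee E_N$ (intersected with $E_0$) by inclusion--exclusion or just the union bound from one side, and by a careful second-moment-style count from the other.

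First I would compute $\Pr[E_0]$ exactly. Conditioned on any fixed value of $\vec{c}$, the inner product $\iprod{\vec{z}, \vec{a}+\vec{c}}$ over random $\vec{z}$ is uniform on $\Z_q$ if $\vec{a}+\vec{c} \neq \vec{0}$ and is identically $0$ otherwise; since $\vec{c}$ is uniform, $\vec{a}+\vec{c} = \vec{0}$ with probability $q^{-n}$, giving $\Pr[E_0] = (1-q^{-n})\cdot \tfrac1q + q^{-n} = \tfrac1q + (1 - \tfrac1q)q^{-n}$, which lies in $[\tfrac1q, \tfrac1q + q^{-n}]$ — this already yields the claimed upper bound once we drop the subtracted terms. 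For the lower bound, the key step is to show $\Pr[E_0 \wedge E_i] \le \tfrac{1}{q^2} + (\text{small})$ for each $i$, so that by the union bound $\Pr[E_0 \wedge \bigvee_i E_i] \le \sum_i \Pr[E_0 \wedge E_i] \le \tfrac{N}{q^2} + (\text{small})$, and subtracting from $\Pr[E_0] \ge \tfrac1q$ gives $\tfrac1q - \tfrac{N}{q^2} - (\text{error})$. To estimate $\Pr[E_0 \wedge E_i]$: the pair of linear forms $(\iprod{\vec{z},\vec{a}+\vec{c}}, \iprod{\vec{z},\vec{a}_i+\vec{c}})$ in $\vec{z}$, for fixed $\vec{c}$, is jointly uniform on $\Z_q^2$ provided $\vec{a}+\vec{c}$ and $\vec{a}_i+\vec{c}$ are linearly independent over $\Z_q$; one should partition on $\vec{c}$ according to the rank of the $2 \times n$ matrix with rows $\vec{a}+\vec{c}$, $\vec{a}_i + \vec{c}$. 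The rank-$2$ case contributes exactly $\tfrac{1}{q^2}$ per such $\vec{c}$; the rank-$\le 1$ cases — where $\vec{a}+\vec{c}$ and $\vec{a}_i+\vec{c}$ are parallel, equivalently $\vec{c}$ lies in a coset of the line through $\vec{a} - \vec{a}_i$ (which is nonzero by hypothesis, so this is a genuinely lower-dimensional set of size at most $q^2$ out of $q^n$) — contribute at most $\Pr[\text{rank} \le 1] \le q^2/q^n = q^{2-n}$, and summing over $i \in [N]$ gives the $N/q^{n-1}$ term. (One has to be slightly careful that the rank-$1$ and rank-$0$ sub-cases are handled so the total error is at most $N/q^{n-1}$, matching the statement; there may be a small amount of slack to absorb.)

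The main obstacle I anticipate is the bookkeeping in the rank-$\le 1$ analysis: correctly characterizing which $\vec{c}$ make $\vec{a}+\vec{c}$ and $\vec{a}_i+\vec{c}$ linearly dependent over the ring (field) $\Z_q$, counting them to get exactly the stated $q^{-(n-1)}$-scale error per index, and ensuring the lower bound does not lose an extra factor. The homogeneous hypothesis $\vec{a} \neq \vec{a}_i \pmod q$ is exactly what rules out the degenerate possibility that the two forms always coincide — this is where it gets used — and after that, everything reduces to linear algebra over $\Z_q$ plus a union bound over $i$. Since the statement is quoted from \cite{conf/soda/Stephens-Davidowitz16}, I would also double-check the precise form of the error terms against that source rather than re-deriving the sharpest possible constants.
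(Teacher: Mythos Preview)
The paper does not prove this theorem at all; it is quoted verbatim from \cite{conf/soda/Stephens-Davidowitz16} and then immediately applied (in \cref{cor:sparsification-dist-lb}). So there is no ``paper's own proof'' to compare against, and your plan to consult the original source for the sharp constants is exactly right.

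On the merits, your approach is correct and is essentially the standard one: compute $\Pr[E_0]$ exactly to get the upper bound, then for the lower bound write $\Pr[E_0 \wedge \bigwedge_i \overline{E_i}] \ge \Pr[E_0] - \sum_i \Pr[E_0 \wedge E_i]$ and control each $\Pr[E_0 \wedge E_i]$ by casing on the rank (over $\Z_q$) of the pair $(\vec{a}+\vec{c}, \vec{a}_i+\vec{c})$. One small slip to fix in the bookkeeping: a coset of the line through $\vec{a}-\vec{a}_i$ contains $q$ points, not $q^2$, so $\Pr[\mathrm{rank}\le 1] = q/q^n = q^{1-n}$, which after summing over $i$ yields the stated $N/q^{n-1}$ term. (In fact, since in the rank-$1$ case both inner products vanish with conditional probability exactly $1/q$, a careful count gives $\Pr[E_0 \wedge E_i] \le 1/q^2 + 1/q^n$, so your argument actually proves the slightly stronger lower bound $1/q - N/q^2 - N/q^n$; the version stated in the paper is just a convenient relaxation.) Your identification of where the hypothesis $\vec{a} \not\equiv \vec{a}_i \pmod q$ enters---ruling out the rank-$0$ case and ensuring the dependent locus is genuinely a line---is also correct.
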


We will use only the lower bound from \cref{thm:inhom-sparsification},
but we note that the upper bound is relatively tight for $q \gg N$.

\begin{corollary}
  \label{cor:sparsification-dist-lb}
  For any $p \in [1,\infty]$ and $r \geq 0$, lattice $\lat$ of
  rank~$n$ with basis~$B$, vector~$\vec{t}$, prime~$q$, and lattice
  vectors $\vec{v}_{1}, \ldots, \vec{v}_{N} \in \lat$ such that
  $\norm{\vec{v}_{i} - \vec{t}}_{p} \leq r$ for all~$i \in [N]$ and
  such that all the $B^{+} \vec{v}_{i} \bmod q$ are distinct, we have
  \[
    \Pr_{\vec{z}, \vec{c} \gets \Z_{q}^{n}}[\dist_p(\vec{t} +
    B\vec{c}, \lat') \leq r]
    \geq \frac{N}{q} - \frac{N(N-1)}{q^2} - \frac{N(N-1)}{q^{n-1}} \ ,
  \]
  where
  $\lat' = \set{\vec{v} \in \lat : \iprod{\vec{z}, B^{+}\vec{v}} = 0
    \pmod*{q}}$.
\end{corollary}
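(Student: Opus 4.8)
The plan is to reduce the statement to \cref{thm:inhom-sparsification} via a simple coset argument. First I would fix $\vec{z},\vec{c} \in \Z_q^n$ and, for each $i \in [N]$, observe that since $\vec{v}_i \in \lat = \lat(B)$ and $B$ has full column rank we have $B^+\vec{v}_i \in \Z^n$; moreover, identifying $\vec{c}$ with an integer representative vector, $B\vec{c} \in \lat$, so $\vec{v}_i + B\vec{c} \in \lat$. Using $B^+B = I$, this shifted vector lies in the sparsified sublattice $\lat'$ exactly when $\iprod{\vec{z}, B^+(\vec{v}_i + B\vec{c})} = \iprod{\vec{z}, B^+\vec{v}_i + \vec{c}} \equiv 0 \pmod q$; call this event $E_i$. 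Whenever $E_i$ occurs, $\vec{v}_i + B\vec{c} \in \lat'$ and $\norm{(\vec{v}_i + B\vec{c}) - (\vec{t} + B\vec{c})}_p = \norm{\vec{v}_i - \vec{t}}_p \leq r$, so $\dist_p(\vec{t} + B\vec{c}, \lat') \leq r$. Hence it suffices to lower-bound $\Pr_{\vec{z},\vec{c}}[\exists\, i \in [N] : E_i]$.

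Next I would apply the lower bound of \cref{thm:inhom-sparsification} once for each $i \in [N]$, with $\vec{x} = \vec{v}_i$ and with $\set{\vec{y}_j}$ taken to be the $N-1$ vectors $\set{\vec{v}_j : j \neq i}$. The hypothesis that all $B^+\vec{v}_i \bmod q$ are distinct gives precisely the required condition $B^+\vec{v}_i \not\equiv B^+\vec{v}_j \pmod q$ for all $j \neq i$, so the theorem yields that the event
\[
A_i := \set[\big]{\iprod{\vec{z}, B^+\vec{v}_i + \vec{c}} \equiv 0 \pmod q \;\wedge\; \iprod{\vec{z}, B^+\vec{v}_j + \vec{c}} \not\equiv 0 \pmod q \;\;\forall\, j \neq i}
\]
has probability at least $\tfrac1q - \tfrac{N-1}{q^2} - \tfrac{N-1}{q^{n-1}}$. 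Note $A_i$ implies $E_i$, hence implies $\dist_p(\vec{t} + B\vec{c}, \lat') \leq r$.

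Finally I would observe that the events $A_1, \ldots, A_N$ are pairwise disjoint: each $A_i$ requires $\iprod{\vec{z}, B^+\vec{v}_j + \vec{c}} \not\equiv 0$ for every $j \neq i$, so no two of them can hold simultaneously. Therefore $\Pr[\dist_p(\vec{t} + B\vec{c}, \lat') \leq r] \geq \Pr[\bigcup_{i} A_i] = \sum_{i=1}^N \Pr[A_i] \geq N\bigl(\tfrac1q - \tfrac{N-1}{q^2} - \tfrac{N-1}{q^{n-1}}\bigr) = \tfrac{N}{q} - \tfrac{N(N-1)}{q^2} - \tfrac{N(N-1)}{q^{n-1}}$, as claimed. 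I do not expect a genuine obstacle: the substantive content is entirely packaged in \cref{thm:inhom-sparsification}, which is used as a black box, and the only points requiring care are the bookkeeping that makes the $A_i$ pairwise disjoint (so that summing their probabilities is a valid lower bound on the union) and the verification that $B^+\vec{v}_i$ is integral so that the expressions make sense modulo~$q$.
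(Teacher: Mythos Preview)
Your proposal is correct and follows essentially the same approach as the paper: define the pairwise-disjoint events in which exactly one $\vec{v}_i$ survives the sparsification (your $A_i$, the paper's $E_i$), invoke \cref{thm:inhom-sparsification} for each with $\vec{x}=\vec{v}_i$ and the remaining $\vec{v}_j$ as the $\vec{y}$'s, and sum. Your separation into $E_i$ and $A_i$ is a minor notational refinement; otherwise the arguments coincide.
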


\begin{proof}
  Observe that for each $i \in [N]$, the events
  \[
    E_{i} := [\iprod{\vec{z}, B^+\vec{v}_i} = 0 \pmod*{q} \text{ and }
    \iprod{\vec{z}, B^+\vec{v}_j} \neq 0 \pmod*{q} \text{ for all
      $j \neq i$}]
  \]
  are disjoint, and by invoking \cref{thm:inhom-sparsification} with
  $\vec{x} = \vec{v}_{i}$ and the $\vec{y}_{j}$ being the remaining
  $\vec{v}_{k}$ for $k \neq i$, we have
  \[ \Pr_{\vec{z}, \vec{c}}[E_{i}] \geq \frac{1}{q} - \frac{N -
      1}{q^{2}} - \frac{N - 1}{q^{n-1}} \ .
  \] 
  Also observe that if~$E_{i}$ occurs, then
  $\vec{v}_{i} + B \vec{c} \in \lat'$ (also
  $\vec{v}_{j} + B \vec{c} \not\in \lat'$ for all $j \neq i$, but we
  will not need this). Therefore,
  \[ \dist_{p}(\vec{t} + B \vec{c}, \lat') \leq \norm{\vec{t} + B
      \vec{c} - (\vec{v}_{i} + B \vec{c})} = \norm{\vec{t} -
      \vec{v}_{i}} \leq r \ . \] So, the probability in the left-hand
  side of the claim is at least
  \[ \Pr_{\vec{z}, \vec{c}}\bracks[\Big]{\bigcup_{i \in [N]} E_{i}} =
    \sum_{i \in [N]} \Pr_{\vec{z}, \vec{c}}[E_{i}] \geq \frac{N}{q} -
    \frac{N(N-1)}{q^{2}} - \frac{N(N-1)}{q^{n-1}} \ .\]
\end{proof}

\subsection{Counting Lattice Points in a Ball}

Following~\cite{conf/stoc/AggarwalS18}, for any discrete set~$A$ of
points (e.g., a lattice, or a subset thereof), we denote the number of
points in~$A$ contained in the closed and open (respectively)~$\ell_p$
ball of radius~$r$ centered at a point~$\vec{t}$ as
\begin{align}
  N_p(A, r, \vec{t}) &:= \card{\set{\vec{y} \in A : \norm{\vec{y} -
  \vec{t}}_p \leq r}} \ , \\
  N^{o}_p(A, r, \vec{t}) &:= \card{\set{\vec{y} \in A : \norm{\vec{y} -
  \vec{t}}_p < r}} \ .
\end{align}
Clearly, $N^{o}_{p}(A,r,\vec{t}) \leq N_{p}(A,r,\vec{t})$.

For $1 \leq p < \infty$ and $\tau > 0$ define
\[
  \Theta_p(\tau) := \sum_{z \in \Z} \exp(-\tau \abs{z}^p) \ .
\]
We use the following upper bound due to Mazo and
Odlyzko~\cite{mazo90:_lattice_points} on the number of short vectors
in the integer lattice.  We include its short proof for completeness.

\begin{proposition}[{\cite{mazo90:_lattice_points}}]
  \label{prop:short-vec-ub}
  For any $p \in [1,\infty)$, $r > 0$, and $n \in \N$,
  \[
    N_p(\Z^n, r, \vec{0}) \leq \min_{\tau > 0} \exp(\tau r^p) \cdot
    \Theta_p(\tau)^n \ .
  \]
\end{proposition}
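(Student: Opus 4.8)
The plan is to use the standard Bernstein/Chernoff-style moment generating function trick, which in this lattice setting is often called the ``theta function'' or Mazo--Odlyzko method. Fix any $\tau > 0$. First I would observe that for every integer point $\vec{y} = (y_1, \dots, y_n) \in \Z^n$ with $\norm{\vec{y}}_p \leq r$, i.e.\ $\sum_{i=1}^n \abs{y_i}^p \leq r^p$, we have the trivial inequality $\exp(\tau r^p) \geq \exp(\tau \sum_{i=1}^n \abs{y_i}^p)$, since $\exp$ is increasing and $\tau > 0$. Equivalently, $1 \leq \exp(\tau r^p) \cdot \exp(-\tau \sum_i \abs{y_i}^p)$ for each such point.

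Next I would sum this inequality over all $\vec{y} \in \Z^n$ with $\norm{\vec{y}}_p \leq r$. The left-hand side counts exactly $N_p(\Z^n, r, \vec{0})$. For the right-hand side, I would drop the radius restriction and sum over \emph{all} of $\Z^n$ (each added term is positive, so this only increases the sum), obtaining
\[
  N_p(\Z^n, r, \vec{0}) \leq \exp(\tau r^p) \sum_{\vec{y} \in \Z^n} \exp\Bigl(-\tau \sum_{i=1}^n \abs{y_i}^p\Bigr) \ .
\]
Then I would factor the multivariate sum as a product of $n$ identical one-dimensional sums, using $\exp(-\tau \sum_i \abs{y_i}^p) = \prod_{i=1}^n \exp(-\tau \abs{y_i}^p)$ and the distributive law over the product of geometric-like series; this yields $\bigl(\sum_{z \in \Z} \exp(-\tau \abs{z}^p)\bigr)^n = \Theta_p(\tau)^n$. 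Hence $N_p(\Z^n, r, \vec{0}) \leq \exp(\tau r^p) \cdot \Theta_p(\tau)^n$ for every $\tau > 0$, and taking the infimum (minimum) over $\tau > 0$ gives the claim.

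The only genuinely non-routine points are bookkeeping ones: checking that $\Theta_p(\tau)$ converges for every $\tau > 0$ and $p \geq 1$ (it does, since $\abs{z}^p \geq \abs{z}$ for $\abs{z} \geq 1$, so the tail is dominated by a geometric series), which justifies interchanging the finite-radius sum with the product factorization; and confirming that the product expansion of $\Theta_p(\tau)^n$ is legitimate, which follows from absolute convergence (all terms positive). I do not expect any real obstacle here — the main ``step'' is just recognizing the MGF trick and being careful that relaxing the summation domain to all of $\Z^n$ is what makes the product factorization possible.
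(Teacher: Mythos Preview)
Your proposal is correct and is essentially the same argument as the paper's: both fix $\tau>0$, compare $\exp(-\tau\norm{\vec{z}}_p^p)$ to $\exp(-\tau r^p)$ on the ball, extend the sum to all of $\Z^n$, factor into $\Theta_p(\tau)^n$, and optimize over~$\tau$. The paper just writes it as a lower bound on $\Theta_p(\tau)^n$ and rearranges, while you write it as an upper bound on $N_p$; the extra convergence remarks you include are fine but taken for granted in the paper.
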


\begin{proof}
  For $\tau > 0$ we have
  \[
    \Theta_p(\tau)^n = \sum_{\vec{z} \in \Z^n} \exp(-\tau
    \norm{\vec{z}}_p^p) \geq \sum_{\vec{z} \in \Z^n \cap
      r\mathcal{B}_{p}^{n}} \exp(-\tau \norm{\vec{z}}_p^p) \geq
    \exp(-\tau r^p) \cdot N_p(\Z^n, r, \vec{0}) \ .
  \]
  The result follows by rearranging and taking the minimum over all
  $\tau > 0$.
\end{proof}

\subsection{Hardness Assumptions}
\label{sec:hardness-assumptions}

We recall the Exponential Time Hypothesis (ETH) of Impagliazzo and
Paturi~\cite{journals/jcss/ImpagliazzoP01}, and several of its
variants. These hypotheses make stronger assumptions about the
complexity of the $k$-SAT problem than the assumption $\P \neq \NP$,
and serve as highly useful tools for studying the fine-grained
complexity of hard computational problems. Indeed, we will show that
strong fine-grained hardness for $\BDD$ follows from these hypotheses.

\begin{definition}
  \label{def:eth}
  The \emph{(randomized) Exponential Time Hypothesis} ((randomized)
  ETH) asserts that there is no (randomized) $2^{o(n)}$-time algorithm
  for $3$-SAT on~$n$ variables.
\end{definition}

\begin{definition}
  \label{def:seth}
  The \emph{(randomized) Strong Exponential Time Hypothesis}
  ((randomized) SETH) asserts that for every $\eps > 0$ there exists
  $k \in \Z^+$ such that there is no (randomized) $2^{(1-\eps)n}$-time
  algorithm for $k$-SAT on~$n$ variables.
\end{definition}

For proving hardness of lattice problem with preprocessing, we define
(Max-)$k$-SAT with preprocessing as follows. The preprocessing input
is a size parameter~$n$, encoded in unary. The query input is a
$k$-SAT formula $\phi$ with~$n$ variables and~$m$ (distinct) clauses,
together with a threshold $W \in \set{0, \ldots m}$ in the case of
Max-$k$-SAT. For $k$-SAT, it is a YES instance if~$\phi$ is
satisfiable, and is a NO instance otherwise. For Max-$k$-SAT, it is a
YES instance if there exists an assignment to the variables of~$\phi$
that satisfies at least~$W$ of its clauses, and is a NO instance
otherwise.

Observe that because the preprocessing input is just~$n$, a
preprocessing algorithm for (Max-)$k$-SAT with preprocessing is
equivalent to a (non-uniform) family of circuits for the problem
\emph{without} preprocessing. Also, for any fixed~$k$, because there
are only $O(n^{k})$ possible clauses on~$n$ variables, the length of
the query input for (Max-)$k$-SAT instances having preprocessing
input~$n$ is $\poly(n)$, so we get the following corollary
of~\cref{lem:weak-strong-equivalent}.

\begin{corollary}
  \label{cor:preprocessing-k-sat-rerandomization}
  If (Max-)$k$-SAT with preprocessing has a randomized $T(n)$-time
  algorithm, then it has a deterministic $T(n) \cdot \poly(n)$-time
  algorithm using $T(n) \cdot \poly(n)$-length preprocessed output.
\end{corollary}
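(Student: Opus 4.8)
The plan is to obtain this as a direct consequence of \cref{lem:weak-strong-equivalent}, so the only thing to check is that (Max-)$k$-SAT with preprocessing falls within that lemma's scope. The lemma requires two properties of the preprocessing problem~$\Pi$: that it is an $\NP$ relation or a decision problem, and that $\abs{x_Q} = \poly(\abs{x_P})$ for every instance $(x_P, x_Q) \in \Pi_{\text{inst}}$. I would verify both for (Max-)$k$-SAT with preprocessing and then simply invoke the lemma.

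The first property is immediate from the definition: every instance of (Max-)$k$-SAT with preprocessing is declared to be either a YES instance or a NO instance, so it is a decision problem in the sense of \cref{sec:preprocessing}. For the second property, recall that the preprocessing input is the size parameter~$n$ written in unary, so $\abs{x_P} = \Theta(n)$, while the query input consists of a $k$-SAT formula~$\phi$ on~$n$ variables with distinct clauses, together with a threshold $W \in \set{0, \dots, m}$ in the Max-$k$-SAT case. For fixed~$k$ there are at most $(2n)^k = O(n^k)$ distinct clauses on~$n$ variables, each encodable in $O(k \log n)$ bits, and $W$ needs only $O(\log n)$ further bits; hence $\abs{x_Q} = O(n^k \log n) = \poly(n) = \poly(\abs{x_P})$. (Equivalently, since $x_P$ is just~$n$, a preprocessing algorithm here is nothing more than a non-uniform circuit family for ordinary (Max-)$k$-SAT, as already observed above.)

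Applying \cref{lem:weak-strong-equivalent} then turns a randomized $T$-time algorithm into a deterministic $T \cdot \poly(\abs{x_P} + \abs{x_Q})$-time one with a preprocessed output of length $T \cdot \poly(\abs{x_P} + \abs{x_Q})$; since $\abs{x_P} + \abs{x_Q} = \poly(n)$, both bounds become $T \cdot \poly(n)$, which is the claim. There is no real obstacle in this argument: it is pure bookkeeping. The one point that warrants a moment's attention is the mild notational mismatch between stating the running time as a function of~$n$ (as in the corollary) and as a function of the total input length (as in \cref{def:preprocessing-algorithm} and \cref{lem:weak-strong-equivalent}); this is harmless precisely because the derandomization only multiplies the running time and preprocessed-output length by a $\poly(\abs{x_P} + \abs{x_Q}) = \poly(n)$ factor, without otherwise altering~$T$ — so, e.g., a $2^{(1-\eps)n}$-time randomized algorithm yields a $2^{(1-\eps)n} \cdot \poly(n)$-time deterministic one, as needed for the applications to (non-uniform) SETH.
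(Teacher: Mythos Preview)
Your proposal is correct and matches the paper's approach exactly: the paper states this corollary immediately after observing that the query input has length $\poly(n)$ (since there are only $O(n^k)$ clauses), and presents it explicitly as a direct corollary of \cref{lem:weak-strong-equivalent}. Your write-up simply unpacks this in slightly more detail, including the useful remark about reconciling $T(n)$ with $T(\abs{x_P}+\abs{x_Q})$.
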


Following,
e.g.,~\cite{conf/focs/Stephens-Davidowitz19,aggarwal2019finegrained},
we also define \emph{non-uniform} variants of ETH and SETH, which deal
with the complexity of $k$-SAT with preprocessing. More precisely,
non-uniform ETH asserts that no family of size-$2^{o(n)}$ circuits
solves $3$-SAT on~$n$ variables (equivalently, $3$-SAT with
preprocessing does not have a $2^{o(n)}$-time algorithm), and
non-uniform SETH asserts that for every $\eps > 0$ there exists
$k \in \Z^+$ such that no family of circuits of size $2^{(1 - \eps)n}$
solves $k$-SAT on~$n$ variables (equivalently, $k$-SAT with
preprocessing does not have a $2^{(1-\eps)n}$-time algorithm). These
hypotheses are useful for analyzing the fine-grained complexity of
preprocessing problems.

One might additionally consider ``randomized non-uniform'' versions of
(S)ETH. However, \cref{cor:preprocessing-k-sat-rerandomization} says
that a randomized algorithm for (Max-)$k$-SAT with preprocessing can
be derandomized with only polynomial overhead, so randomized
non-uniform (S)ETH is equivalent to (deterministic) non-uniform
(S)ETH, so we only consider the latter.

Finally, we remark that one can define weaker versions of randomized
or non-uniform (S)ETH with Max-$3$-SAT (respectively, Max-$k$-SAT) in
place of $3$-SAT (resp., $k$-SAT). Many of our results hold even under
these weaker hypotheses. 
In particular, the derandomization result in~\cref{cor:preprocessing-k-sat-rerandomization}
applies to both $k$-SAT and Max-$k$-SAT.

\section{\texorpdfstring{Hardness of $\BDD_{p,\alpha}$}{Hardness of BDD\_{p,alpha}}}
\label{sec:bdd}

In this section, we present our main result by giving a reduction from
a known-hard variant $\GapCVP'_p$ of the Closest Vector Problem (CVP)
to $\BDD$.  We peform this reduction in two main steps.
\begin{enumerate}
\item First, in~\cref{sec:stbdd-to-bdd} we define a variant of
  $\BDD_{p, \alpha}$, which we call $(S, T)$-$\BDD_{p,
    \alpha}$. Essentially, an instance of this problem is a lattice
  that may have up to~$S$ ``short'' nonzero vectors of~$\ell_{p}$ norm
  bounded by some~$r$, and a target vector that is ``close''
  to---i.e., within distance~$\alpha r$ of---at least~$T$ lattice
  vectors. (The presence of short vectors prevents this from being a
  true $\BDD_{p,\alpha}$ instance.) We then give a reduction, for
  $S \ll T$, from $(S, T)$-$\BDD_{p, \alpha}$ to $\BDD_{p, \alpha}$
  itself, using sparsification.
\item Then, in~\cref{sec:cvp-to-stbdd} we reduce from $\GapCVP'_p$ to
  $(S,T)$-$\BDD_{p,\alpha}$ for suitable $S \ll T$ whenever~$\alpha$
  is sufficiently large as a function of~$p$ (and the desired rank
  efficiency), based on analysis given in \cref{sec:setting-params}
  and \cref{lem:analytic-ub-alpha-p-star}.
\end{enumerate}

\subsection{$(S,T)$-BDD to BDD}
\label{sec:stbdd-to-bdd}

We start by defining a special decision variant of BDD. Essentially,
the input is a lattice and a target vector, and the problem is to
distinguish between the case where there are few ``short'' lattice
vectors but many lattice vectors ``close'' to the target, and the case
where the target is not close to the lattice. There is a gap factor
between the ``close'' and ``short'' distances, and for technical
reasons we count only those ``close'' vectors having binary
coefficients with respect to the given input basis.

\begin{definition}
  \label{def:st-bdd}
  Let $S = S(n), T = T(n) \geq 0$, $p \in [1,\infty]$, and
  $\alpha = \alpha(n) > 0$. An instance of the decision promise
  problem $(S, T)$-$\BDD_{p,\alpha}$ is a lattice
  basis~$B \in \R^{d \times n}$, a distance $r > 0$, and a target
  $\vec{t} \in \R^d$.
  \begin{itemize}
  \item It is a YES instance if
    $N^{o}_p(\lat(B) \setminus \set{\vec{0}}, r, \vec{0}) \leq S(n)$ and
    $N_{p}(B \cdot \bit^{n}, \alpha r, \vec{t}) \geq T(n)$.
  \item It is a NO instance if $\dist_p(\vec{t}, \lat(B)) > \alpha r$.
  \end{itemize}
  The search version is: given a YES instance $(B,r,\vec{t})$, find a
  $\vec{v} \in \lat(B)$ such that
  $\norm{\vec{v} - \vec{t}}_{p} \leq \alpha r$.

  The preprocessing search and decision problems
  $(S,T)$-$\BDDP_{p,\alpha}$ are defined analogously, where the
  preprocessing input is~$B$ and~$r$, and the query input
  is~$\vec{t}$.
\end{definition}

We stress that in the preprocessing problems $\BDDP$, the distance~$r$
is part of the preprocessing input; this makes the problem no harder
than a variant where~$r$ is part of the query input. So, our hardness
results for the above definition immediately apply to that variant as
well. However, our reduction \emph{from} $(S,T)$-$\BDDP$ (given in
\cref{lem:0t-bdd}) critically relies on the fact that~$r$ is part of
the preprocessing input.

Clearly, there is a trivial reduction from the decision version of
$(S,T)$-$\BDD_{p,\alpha}$ to its search version (and similarly for the
preprocessing problems): just call the oracle for the search problem
and test whether it returns a lattice vector within
distance~$\alpha r$ of the target. So, to obtain more general results,
our reductions involving $(S,T)$-$\BDD$ will be
\emph{from} the search version, and \emph{to} the decision version.

\paragraph{Reducing to BDD.}

We next observe that for $S(n)=0$ and any $T(n) > 0$, there is
\emph{almost} a trivial reduction from $(S,T)$-$\BDD_{p,\alpha}$ to
ordinary $\BDD_{p,\alpha}$, because YES instances of the former
satisfy the $\BDD_{p,\alpha}$ promise. (See below for the easy proof.)
The only subtlety is that we want the $\BDD_{p,\alpha}$ oracle to
return a lattice vector that is within distance $\alpha r$ of the
target; recall that the definition of $\BDD_{p,\alpha}$ only
guarantees distance $\alpha \cdot \sm{p}_{1}(\lat(B))$. This issue is
easily resolved by modifying the lattice to \emph{upper bound} its minimum
distance by~$r$, which increases the lattice's rank by one. (For the
alternative definition of $\BDD$ described after \cref{def:bdd}, the
trivial reduction works, and no increase in the rank is needed.)

\begin{lemma}
  \label{lem:0t-bdd}
  For any $T(n) > 0$, $p \in [1,\infty]$, and
  $\alpha = \alpha(n) > 0$, there is a deterministic Cook reduction
  from the search version of $(0, T(n))$-$\BDD_{p,\alpha}$ (resp.,
  with preprocessing) in rank~$n$ to $\BDD_{p,\alpha}$ (resp., with
  preprocessing) in rank $n+1$.
\end{lemma}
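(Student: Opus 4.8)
The plan is to handle the non-preprocessing and preprocessing cases essentially in parallel, since the reduction is the same in both settings; the only thing that changes is whether the lattice-modification step happens in the preprocessing phase or the query phase. First I would record what a YES instance of $(0,T)$-$\BDD_{p,\alpha}$ gives us: a basis $B \in \R^{d\times n}$, a radius $r>0$, and a target $\vec t$ with $N^o_p(\lat(B)\setminus\set{\vec 0}, r, \vec 0) \le 0$, i.e.\ no nonzero lattice vector of $\ell_p$-norm strictly less than $r$, so $\sm p_1(\lat(B)) \ge r$; and $N_p(B\cdot\bit^n, \alpha r, \vec t) \ge T > 0$, so in particular $\dist_p(\vec t, \lat(B)) \le \alpha r \le \alpha\cdot\sm p_1(\lat(B))$, meaning $(B,\vec t)$ already satisfies the $\BDD_{p,\alpha}$ promise. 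Calling a $\BDD_{p,\alpha}$ oracle on $(B,\vec t)$ therefore returns some $\vec v\in\lat(B)$ with $\norm{\vec v - \vec t}_p \le \alpha\cdot\sm p_1(\lat(B))$, but this is too weak: we need distance at most $\alpha r$, and $\sm p_1(\lat(B))$ could be much larger than $r$.

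The fix is to augment the lattice so that its minimum distance is pinned to exactly $r$ (or at least squeezed into $[\text{something}, r]$) without destroying the close vectors. Concretely, I would form the rank-$(n+1)$ basis $B' = \begin{pmatrix} B & \vec 0 \\ \vec 0^t & r \end{pmatrix}$ (appending one new coordinate and one new basis vector of $\ell_p$-length $r$), and the new target $\vec t' = (\vec t, 0)$. Then $\sm p_1(\lat(B')) = \min\set{\sm p_1(\lat(B)), r} = r$: the new generator contributes a vector of length exactly $r$, any lattice vector using the new generator nontrivially has $\ell_p$-length at least $r$ (its last coordinate is a nonzero multiple of $r$), and by the YES promise every nonzero vector of $\lat(B)$ already has length $\ge r$. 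Meanwhile every close vector $\vec v = B\vec x$ with $\vec x\in\bit^n$ lifts to $\vec v' = (\vec v, 0)\in\lat(B')$ with $\norm{\vec v' - \vec t'}_p = \norm{\vec v - \vec t}_p \le \alpha r$, so $(B', \vec t')$ is a genuine $\BDD_{p,\alpha}$ instance. Run the oracle on it to get $\vec w'\in\lat(B')$ with $\norm{\vec w' - \vec t'}_p \le \alpha\cdot\sm p_1(\lat(B')) = \alpha r$; since the last coordinate of $\vec t'$ is $0$ and the last coordinate of $\vec w'$ is an integer multiple of $r$, and $\alpha r < \infty$, I'd argue that coordinate must in fact be $0$ whenever $\alpha$ is small enough (and even if not, we can just project: drop the last coordinate of $\vec w'$ to get $\vec w\in\lat(B)$ with $\norm{\vec w - \vec t}_p \le \norm{\vec w' - \vec t'}_p \le \alpha r$). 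Output $\vec w$.

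For the preprocessing version, the key point is that $B$ and $r$ are both part of the preprocessing input to $(0,T)$-$\BDDP_{p,\alpha}$, so $R_P$ can compute $B'$ (which depends only on $B$ and $r$) in the preprocessing phase and pass it to the $\BDDP_{p,\alpha}$ preprocessing oracle; then in the query phase $R_Q$ receives $\vec t$, forms $\vec t' = (\vec t, 0)$, queries the oracle, and post-processes as above. This is why it was important to note in \cref{def:st-bdd} that $r$ is part of the preprocessing input. The reduction is clearly deterministic, makes one oracle call, and $R_Q$ runs in polynomial time, so it is a Cook reduction (and in the preprocessing setting, $R_P$'s output length is polynomial in its input length). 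The one mild subtlety — really the only place any care is needed — is ruling out the oracle returning a vector with a nonzero last coordinate; but as noted this is a non-issue because we can simply truncate, and truncation only decreases the $\ell_p$ distance. I don't anticipate a genuine obstacle here; the lemma is essentially bookkeeping around the gap between $\sm p_1(\lat(B))$ and the explicit radius $r$.
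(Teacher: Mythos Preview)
Your proposal is correct and matches the paper's proof essentially line for line: both augment $B$ by a new coordinate carrying a single basis vector of length $r$, lift $\vec t$ by a zero, observe that this pins $\sm p_1$ to exactly $r$ so the $\BDD_{p,\alpha}$ promise is met, and then recover a valid answer by dropping the last coordinate (which can only decrease the $\ell_p$ distance). The preprocessing observation---that $B'$ depends only on $B$ and $r$---is also the same.
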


\begin{proof}
  The reduction works as follows. On input $(B, r, \vec{t})$, call the
  $\BDD_{p,\alpha}$ oracle on
  \[ B' :=
    \begin{pmatrix}
      B & 0 \\ 0 & r
    \end{pmatrix}, \quad
    \vec{t}' :=
    \begin{pmatrix}
      \vec{t} \\ 0
    \end{pmatrix} \ ,
  \]
  and (without loss of generality) receive from the oracle a vector
  $\vec{v}' = (\vec{v}, zr)$ for some $\vec{v} \in \lat$ and
  $z \in \Z$. Output $\vec{v}$.

  We analyze the reduction. Let $\lat = \lat(B)$ and
  $\lat' = \lat(B')$. Because the input is a YES instance, we have
  $N^{o}_{p}(\lat \setminus \set{\vec{0}}, r, \vec{0}) = 0$ and hence
  $\sm{p}_{1}(\lat) \geq r$, so $\sm{p}_{1}(\lat') = r$. Moreover,
  $N_{p}(B \cdot \bit^{n}, \alpha r, \vec{t}) > 0$ implies that
  $\dist_{p}(\vec{t}', \lat') = \dist(\vec{t}, \lat) \leq \alpha r =
  \alpha \cdot \sm{p}_{1}(\lat')$. So, $(B',\vec{t}')$ satisfies the
  $\BDD_{p,\alpha}$ promise, hence the oracle is obligated to return
  some $\vec{v}' = (\vec{v}, zr) \in \lat'$ where $\vec{v} \in \lat$
  and
  $\alpha r = \alpha \sm{p}_1(\lat') \geq \norm{\vec{v}' - \vec{t}'}_{p} \geq \norm{\vec{v} -
    \vec{t}}_{p}$. Therefore, the output~$\vec{v}$ of the reduction is
  a valid solution.

  Finally, observe that all of the above also constitutes a valid
  reduction for the preprocessing problems, because~$B'$ depends only
  on the preprocessing part $B,r$ of the input.
\end{proof}

We now present a more general randomized reduction from $(S,T)$-$\BDD_{p,\alpha}$
to $\BDD_{p,\alpha}$, which works whenever $T(n) \geq 10 S(n)$. The
essential idea is to sparsify the input lattice, so that with some
noticeable probability no short vectors remain, but at least one
vector close to the target does remain. In this case, the result will be an
instance of $(0,1)$-$\BDD_{p,\alpha}$, which reduces to
$\BDD_{p,\alpha}$ as shown above.

We note that the triangle inequality precludes the existence of
$(S, T)$-$\BDD_{p,\alpha}$ instances with $T > S+1$ and
$\alpha \leq 1/2$, so with this approach we can only hope to show
hardness of $\BDD_{p,\alpha}$ for $\alpha > 1/2$, i.e., the
unique-decoding regime remains out of reach.

\begin{theorem}
  \label{thm:st-bdd-to-bdd}
  For any $S = S(n) \geq 1$ and $T = T(n) \geq 10 S$ that is
  efficiently computable (for unary~$n$), $p \in [1,\infty]$, and
  $\alpha = \alpha(n) > 0$, there is a randomized Cook reduction with
  no false positives from the search version of
  $(S, T)$-$\BDD_{p,\alpha}$ (resp., with preprocessing) in rank~$n$
  to $\BDD_{p,\alpha}$ (resp., with preprocessing) in rank $n+1$.
\end{theorem}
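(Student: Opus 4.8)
The plan is to use sparsification (via \cref{cor:sparsification-min-dist-ub} and \cref{cor:sparsification-dist-lb}) to transform a $(S,T)$-$\BDD_{p,\alpha}$ instance into a $(0,1)$-$\BDD_{p,\alpha}$ instance with noticeable probability, then invoke \cref{lem:0t-bdd} to reduce to $\BDD_{p,\alpha}$ in rank $n+1$. Given an $(S,T)$-$\BDD_{p,\alpha}$ instance $(B, r, \vec{t})$ in rank $n$, first I would choose a prime $q$ with $q = \Theta(T)$ (say $10S \leq T/2 \leq q \leq T$, using Bertrand's postulate and the efficient computability of $T$ for unary $n$), then sample $\vec{z}, \vec{c} \gets \Z_q^n$ and form the sparsified sublattice $\lat' = \set{\vec{v} \in \lat(B) : \iprod{\vec{z}, B^+\vec{v}} = 0 \pmod q}$ together with the shifted target $\vec{t}' = \vec{t} + B\vec{c}$. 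Output the instance $(B', r, \vec{t}')$ where $B'$ is a basis for $\lat'$, and whenever the downstream $\BDD_{p,\alpha}$ oracle returns a vector $\vec{v}' \in \lat'$, check whether $\norm{\vec{v}' - \vec{t}'}_p \leq \alpha r$; if so, output $\vec{v}' - B\vec{c} \in \lat(B)$, which then satisfies $\norm{(\vec{v}' - B\vec{c}) - \vec{t}}_p \leq \alpha r$ as required. (If the check fails, declare failure — this is what gives "no false positives." Repeating $\poly(n)$ times boosts the success probability.)

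The correctness analysis has two halves. \emph{No short vectors survive:} since the input is a YES instance, $N^o_p(\lat(B)\setminus\set{\vec{0}}, r, \vec{0}) \leq S$, so by \cref{cor:sparsification-min-dist-ub}, $\Pr[\sm{p}_1(\lat') < r] \leq S/q$. \emph{A close vector survives:} let $\vec{v}_1, \ldots, \vec{v}_N \in B\cdot\bit^n$ be the $N \geq T$ vectors with $\norm{\vec{v}_i - \vec{t}}_p \leq \alpha r$; because they have distinct binary coefficient vectors and $q > 1$ (indeed $q \geq 2$), their coefficient vectors are distinct mod $q$, so \cref{cor:sparsification-dist-lb} applies with these $N$ vectors — but I would first truncate to exactly $N' = q$ of them (WLOG $N \geq q$ since $N \geq T \geq q$) to keep the error terms controlled, giving $\Pr[\dist_p(\vec{t}', \lat') \leq \alpha r] \geq N'/q - N'(N'-1)/q^2 - N'(N'-1)/q^{n-1} \geq 1 - (q-1)/q - o(1) $, which for this choice is bounded below by a constant — actually I should pick $N' = \lceil q/2 \rceil$ or similar so that the first term dominates, yielding probability $\geq$ some constant like $1/4$. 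Combining via a union bound: with probability $\geq 1/4 - S/q \geq 1/4 - 1/10 > 1/10$, the sparsified instance is a genuine $(0,1)$-$\BDD_{p,\alpha}$ instance, and feeding it through \cref{lem:0t-bdd} yields a correct answer. For the preprocessing variant, note that $B$ and $r$ are preprocessing inputs while $\vec{t}$ is the query input; the sparsification data $\vec{z}$ depends only on $B$ (it can be sampled in the preprocessing phase), but $\vec{c}$ and the shift $B\vec{c}$ depend only on $B$ as well, so the whole transformation respects the preprocessing/query split, and composing with the preprocessing version of \cref{lem:0t-bdd} works.

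The main obstacle is the parameter bookkeeping: one must choose $q$ (and the number $N'$ of close vectors to which \cref{cor:sparsification-dist-lb} is applied) so that the lower bound on the "close vector survives" probability is a constant bounded safely away from the upper bound $S/q$ on the "short vector survives" probability, while also keeping the $N'(N'-1)/q^{n-1}$ term negligible — this forces $q$ to be polynomially bounded, which is fine since $T = \poly(n)$ is efficiently computable, but it requires $N' \leq q$ and hence using only a fraction of the guaranteed $T$ close vectors. A secondary subtlety is computing an explicit basis $B'$ for the sparsified lattice $\lat'$ in polynomial time (standard: $\lat'$ is the kernel of a single linear functional mod $q$ composed with $B^+$, which has index $q$ or $1$ in $\lat(B)$ and whose basis is computable via HNF), and ensuring the rank is preserved at $n$ so that \cref{lem:0t-bdd} delivers rank $n+1$. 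The "no false positives" property is immediate from the explicit distance check on the oracle's output.
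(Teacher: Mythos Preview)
Your approach is essentially the same as the paper's: sparsify with a prime $q = \Theta(T)$ and shifted target $\vec{t}' = \vec{t} + B\vec{c}$, then apply \cref{lem:0t-bdd}. The paper simply takes $q \in [10T, 20T]$ and applies \cref{cor:sparsification-dist-lb} with all $T$ close vectors (no truncation needed, since then $T/q - T^2/q^2 \geq 1/20 - 1/100$), which avoids your bookkeeping juggling; also, be aware that in the applications $T = 2^{\Omega(n)}$, not $\poly(n)$ --- what matters for efficiency is only that $\log T$ (hence $\log q$) is polynomial in $n$, which follows from $T$ being efficiently computable on unary input.
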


\begin{proof}
  By \cref{lem:0t-bdd}, it suffices to give such a reduction to
  $(0,1)$-$\BDD_{p,\alpha}$ in rank~$n$, which works as follows. On
  input $(B, r, \vec{t})$, let $\lat = \lat(B)$.  First, randomly
  choose a prime~$q$ where $10T \leq q \leq 20T$. Then sample
  $\vec{z}, \vec{c} \in \Z_q^n$ independently and uniformly at random,
  and define
  \[
    \lat' := \set{\vec{v} \in \lat : \iprod{\vec{z}, B^{+}\vec{v}} = 0
      \pmod*{q}} \text{ and } \vec{t}' := \vec{t} + B\vec{c} \ .
  \]
  Let $B'$ be a basis of $\lat'$. (Such a basis is efficiently computable from
  $B$, $\vec{z}$, and $q$. See,
  e.g.,~\cite[Claim~2.15]{conf/soda/Stephens-Davidowitz16}.) Invoke
  the $(0,1)$-$\BDD_{p,\alpha}$ oracle on $(B', r, \vec{t}')$, and
  output whatever the oracle outputs.

  We now analyze the reduction. We are promised that $(B,r,\vec{t})$
  is a YES instance of $(S,T)$-$\BDD_{p,\alpha}$, and it suffices to
  show that $(B', r, \vec{t}')$ is a YES instance of
  $(0,1)$-$\BDD_{p,\alpha}$, i.e., $\sm{p}_{1}(\lat') \geq r$ and
  $\dist_{p}(\vec{t}', \lat') \leq \alpha r$, with some positive
  constant probability. By \cref{cor:sparsification-min-dist-ub} we
  have
  \[
    \Pr[\sm{p}_1(\lat') < r] \leq \frac{N^{o}_{p}(\lat \setminus
      \set{\vec{0}}, r, \vec{0})}{q} \leq \frac{S}{q} \leq
    \frac{1}{100} \ .
  \]
  Furthermore, because there are~$T$ vectors $\vec{v}_{i} \in \lat$
  for which $\norm{\vec{v}_{i} - \vec{t}}_{p} \leq \alpha r$, and
  their coefficient vectors $B^{+} \vec{v}_{i} \in \bit^{n}$ are
  distinct (as integer vectors, and hence also modulo~$q$), by
  \cref{cor:sparsification-dist-lb} we have
  \[
    \Pr[\dist_p(\vec{t}', \lat') \leq \alpha r] \geq \frac{T}{q} -
    \frac{T^{2}}{q^2} - \frac{T^{2}}{q^{n-1}} \geq \frac{1}{20} -
    \frac{1}{400} - \frac{1}{400 q^{n-3}} \ .
  \]
  Therefore, by the union bound we have
  \begin{equation*}
    \Pr[\sm{p}_1(\lat') \geq r \text{ and } \dist_p(\vec{t}', \lat') \leq \alpha r] 
    \geq \frac{1}{20} - \frac{1}{400} - \frac{1}{400q^{n-3}} - \frac{1}{100}
    \geq \frac{1}{40}
  \end{equation*} 
  for all $n \geq 3$, as desired.

  Finally, the above also constitutes a valid reduction for the
  preprocessing problems (in the sense of
  \cref{def:preprocessing-cook-reduction}), because~$B'$ depends only
  on~$B$ from the preprocessing part of the input and the reduction's
  own random choices (and~$r$ remains unchanged).
\end{proof}

\subsection{GapCVP' to $(S,T)$-BDD}
\label{sec:cvp-to-stbdd}

Here we show that a known-hard variant of the (exact) Closest Vector
Problem reduces to $(S,T)$-BDD (in its decision version).

\begin{definition}
  \label{def:gap-cvp-prime}
  For $p \in [1,\infty]$, the (decision) promise problem
  $\GapCVP'_{p}$ is defined as follows: an instance consists of
  a basis $B \in \R^{d \times n}$ and a target vector
  $\vec{t} \in \R^d$.
  \begin{itemize}
  \item It is a YES instance if there exists $\vec{x} \in \bit^n$ such
    that $\norm{B\vec{x} - \vec{t}}_p \leq 1$.
  \item It is a NO instance if $\dist_p(\vec{t}, \lat(B)) > 1$.
  \end{itemize}
  The preprocessing (decision) promise problem $\GapCVPP'_{p}$
  is defined analogously, where the preprocessing input is~$B$ and the
  query input is~$\vec{t}$.
\end{definition}

Observe that for $\GapCVP'_{p}$ the distance threshold is~$1$ (and not
some instance-dependent value) without loss of generality, because we
can scale the lattice and target vector. The same goes for
$\GapCVPP'_{p}$, with the caveat that any instance-dependent distance
threshold would need to be included in the \emph{preprocessing} part
of the input, not the query part. (See \cref{rem:gapcvpp-variants}
below for why this is essentially without loss of generality, under a
mild assumption on the $\GapCVPP'_{p}$ instances.) We remark that some
works define these problems with a stronger requirement that in the NO
case, $\dist_p(z \vec{t}, \lat(B)) > r$ for all
$z \in \Z \setminus \set{0}$. We will not need this stronger
requirement, and some of the hardness results for $\GapCVP'$ that we
rely on are not known to hold with it, so we use the weaker
requirement.

We next describe a simple transformation on lattices and target
vectors: we essentially take a direct sum of the input lattice with
the integer lattice of any desired dimension~$n$ and append an
all-$\frac12$s vector to the target vector.

\begin{lemma}
  \label{lem:cvp-to-st-bdd-transform}
  For any $n' \leq n$, define the following transformations that map a
  basis~$B'$ of a rank-$n'$ lattice~$\lat'$ to a basis~$B$ of a
  rank-$n$ lattice~$\lat$, and a target vector~$\vec{t}'$ to a target
  vector~$\vec{t}$:
  \begin{equation}
    \label{eq:Bt}
    B := \begin{pmatrix}
      \tfrac12 B' & 0 \\
      I_{n'} & 0 \\
      0 & I_{n-n'}
    \end{pmatrix}, \qquad \vec{t} := \frac{1}{2}
    \begin{pmatrix}
      \vec{t}' \\
      \vec{1}_{n'} \\
      \vec{1}_{n-n'}
    \end{pmatrix} ,
  \end{equation}
  and define
  \begin{equation}
    \label{eq:s_p}
    s_{p} = s_{p}(n) := \tfrac12 (n+1)^{1/p} \text{ for } p \in [1,\infty),
    \text{ and } s_{\infty} := 1/2.
  \end{equation}
   Then:
  \begin{enumerate}[itemsep=0pt]
  \item
    $N^{o}_{p}(\lat, r, \vec{0}) \leq N^{o}_{p}(\Z^{n}, r,
    \vec{0})$ for all $r \geq 0$;\label{item:num-short}
  \item if there exists an $\vec{x} \in \bit^{n'}$ such that
    $\norm{B' \vec{x} - \vec{t}'}_{p} \leq 1$, then
    $N_{p}(B \cdot \bit^{n}, s_{p}, \vec{t}) \geq
    2^{n-n'}$;\label{item:yes-num-close}
  \item if $\dist_{p}(\vec{t}', \lat') > 1$ then
    $\dist_{p}(\vec{t},\lat) > s_{p}$.\label{item:no-far}
  \end{enumerate}
\end{lemma}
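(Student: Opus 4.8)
The plan is to handle all three items at once by writing a generic lattice vector in block form. Every $\vec{v} \in \lat$ equals $B\vec{a}$ for a unique $\vec{a} = (\vec{a}_1,\vec{a}_2) \in \Z^{n'}\times\Z^{n-n'}$, and from~\eqref{eq:Bt} I would record the identities
\[
  B\vec{a} = \bigl(\tfrac12 B'\vec{a}_1,\ \vec{a}_1,\ \vec{a}_2\bigr), \qquad
  B\vec{a} - \vec{t} = \bigl(\tfrac12(B'\vec{a}_1 - \vec{t}'),\ \vec{a}_1 - \tfrac12\vec{1}_{n'},\ \vec{a}_2 - \tfrac12\vec{1}_{n-n'}\bigr).
\]
Since the bottom $n$ coordinates of $B\vec{a}$ are exactly $\vec{a}$, the map $\vec{a}\mapsto B\vec{a}$ is a bijection $\Z^n \to \lat$ with $\norm{B\vec{a}}_p \ge \norm{\vec{a}}_p$ for every $p \in [1,\infty]$; hence its inverse $B\vec{a}\mapsto\vec{a}$ sends the open $\ell_p$-ball of radius $r$ in $\lat$ injectively into the open $\ell_p$-ball of radius $r$ in $\Z^n$, which is exactly item~\ref{item:num-short}.

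For items~\ref{item:yes-num-close} and~\ref{item:no-far} I would first note the elementary fact that $\abs{m - \tfrac12} \ge \tfrac12$ for every integer $m$, with equality iff $m \in \set{0,1}$. For finite $p$ this yields $\norm{\vec{c} - \tfrac12\vec{1}_k}_p^p \ge k \cdot 2^{-p}$ for all $\vec{c} \in \Z^k$, with equality iff $\vec{c} \in \bit^k$, so combining with the block decomposition,
\[
  \norm{B\vec{a} - \vec{t}}_p^p = 2^{-p}\norm{B'\vec{a}_1 - \vec{t}'}_p^p + \norm{\vec{a}_1 - \tfrac12\vec{1}_{n'}}_p^p + \norm{\vec{a}_2 - \tfrac12\vec{1}_{n-n'}}_p^p \ge 2^{-p}\bigl(\norm{B'\vec{a}_1 - \vec{t}'}_p^p + n\bigr),
\]
with equality iff $\vec{a}_1 \in \bit^{n'}$ and $\vec{a}_2 \in \bit^{n-n'}$. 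For $p = \infty$ the analogue is $\norm{B\vec{a} - \vec{t}}_\infty \ge \max\set{\tfrac12\norm{B'\vec{a}_1 - \vec{t}'}_\infty,\ \tfrac12}$, where again the lower value $\tfrac12$ is attained exactly under the same $0/1$ condition. I would then recall from~\eqref{eq:s_p} that $s_p^p = (n+1)2^{-p}$ for finite $p$ and $s_\infty = \tfrac12$.

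With these in hand, item~\ref{item:yes-num-close} is immediate: given $\vec{x} \in \bit^{n'}$ with $\norm{B'\vec{x} - \vec{t}'}_p \le 1$, I would take $\vec{a}_1 = \vec{x}$ and let $\vec{a}_2 = \vec{w}$ range over all of $\bit^{n-n'}$, so that $B(\vec{x},\vec{w}) \in B\cdot\bit^{n}$ and the displayed bound gives $\norm{B(\vec{x},\vec{w}) - \vec{t}}_p^p \le 2^{-p}(1+n) = s_p^p$ (resp.\ $\le \tfrac12 = s_\infty$ when $p=\infty$); since $B$ is a basis these $2^{n-n'}$ vectors are distinct, hence $N_p(B\cdot\bit^{n}, s_p, \vec{t}) \ge 2^{n-n'}$. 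For item~\ref{item:no-far}, if $\dist_p(\vec{t}',\lat') > 1$ then for every $\vec{a}$ we have $B'\vec{a}_1 \in \lat'$ and so $\norm{B'\vec{a}_1 - \vec{t}'}_p > 1$; plugging this into the display gives $\norm{B\vec{a} - \vec{t}}_p^p > 2^{-p}(1+n) = s_p^p$ (resp.\ $\norm{B\vec{a} - \vec{t}}_\infty \ge \tfrac12\norm{B'\vec{a}_1 - \vec{t}'}_\infty > \tfrac12 = s_\infty$ when $p=\infty$), and since $\dist_p(\vec{t},\lat)$ is attained at some lattice vector we conclude $\dist_p(\vec{t},\lat) > s_p$.

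I do not expect a real obstacle here — the argument is a direct block-wise computation. The two points needing a little care are keeping the finite-$p$ and $p=\infty$ cases separate where their arithmetic differs, and making sure the inequality in item~\ref{item:no-far} is \emph{strict}: for finite $p$ the strictness is carried by the term $2^{-p}\norm{B'\vec{a}_1 - \vec{t}'}_p^p > 2^{-p}$, whereas for $p=\infty$ it comes directly from the block $\tfrac12\norm{B'\vec{a}_1 - \vec{t}'}_\infty > \tfrac12$.
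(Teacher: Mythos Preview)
Your proposal is correct and follows essentially the same approach as the paper: both proofs rely on the block decomposition $B\vec{a} = (\tfrac12 B'\vec{a}_1,\vec{a}_1,\vec{a}_2)$, the bijection with $\Z^n$ via the last $n$ coordinates for item~\ref{item:num-short}, and the identity $\norm{B\vec{a}-\vec{t}}_p^p = 2^{-p}\norm{B'\vec{a}_1-\vec{t}'}_p^p + \norm{\vec{a}_1-\tfrac12\vec{1}}_p^p + \norm{\vec{a}_2-\tfrac12\vec{1}}_p^p$ together with $\abs{m-\tfrac12}\ge\tfrac12$ for integers $m$ for items~\ref{item:yes-num-close} and~\ref{item:no-far}. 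Your treatment is slightly more explicit about the equality conditions and the strictness in item~\ref{item:no-far}, but the argument is the same.
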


\begin{proof}
  \cref{item:num-short} follows immediately by construction of~$B$,
  because vectors
  $\vec{v}' = (\tfrac12 B' \vec{x}, \vec{x}, \vec{y}) \in \lat$ for
  $\vec{x}, \vec{y} \in \Z^{n}$ correspond bijectively to vectors
  $\vec{v} = (\vec{x}, \vec{y}) \in \Z^{n}$, and
  $\norm{\vec{v}}_{p} \leq \norm{\vec{v}'}_{p}$.

  For \cref{item:yes-num-close}, for every $\vec{y} \in \bit^{n-n'}$,
  the vector
  $\vec{v} := (\tfrac12 B' \vec{x}, \vec{x}, \vec{y}) \in \lat$
  satisfies
  \[ \norm{\vec{v} - \vec{t}}_p^{p} = \frac{{\norm{B'\vec{x} -
          \vec{t}'}_{p}^{p}}}{2^{p}} + \frac{n}{2^{p}} \leq
    s_{p}^{p}\] for finite~$p$, and
  $\norm{\vec{v}-\vec{t}}_{\infty} = \max(\tfrac12\norm{B'\vec{x} -
    \vec{t}'}_{\infty}, \tfrac12) = \tfrac12 = s_{\infty}$. The claim
  follows.

  For \cref{item:no-far}, for  finite~$p$ we have
  \[
    \dist_p(\vec{t}, \lat)^{p} \geq \frac{\dist_{p}(\vec{t}',
      \lat')^{p}}{2^{p}} + \frac{n}{2^{p}} > \frac{n+1}{2^{p}} =
    s_{p}^{p} \ , \] and for $p=\infty$ we immediately have
  $\dist_{\infty}(\vec{t}, \lat) \geq \tfrac12
  \dist_{\infty}(\vec{t}', \lat') > \tfrac12 = s_{\infty}$, as needed.
\end{proof}

\begin{corollary}
  \label{cor:cvp-to-st-bdd}
  For any $p \in [1,\infty]$, $\alpha > 0$, and $\poly(n')$-bounded
  $n \geq n'$, there is a deterministic Karp reduction from
  $\GapCVP'_{p}$ (resp., with preprocessing) in rank~$n'$ to the
  decision version of $(S, T)$-$\BDD_{p,\alpha}$ (resp., with
  preprocessing) in rank~$n$, where
  $S(n) = N^{o}_p(\Z^{n} \setminus \set{\vec{0}}, s_{p}/\alpha,
  \vec{0})$ for~$s_{p}$ as defined in \cref{eq:s_p}, and
  $T(n) = 2^{n-n'}$.
\end{corollary}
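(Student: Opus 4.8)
The plan is to directly instantiate the transformation of \cref{lem:cvp-to-st-bdd-transform} with the target rank~$n$, choosing the distance threshold so that the ``close'' radius $\alpha r$ equals~$s_p$. Concretely, given a $\GapCVP'_p$ instance $(B', \vec{t}')$ of rank~$n'$, the reduction computes $(B, \vec{t})$ via \cref{eq:Bt} (i.e., it pads $\lat(B')$ with the integer lattice~$\Z^{n-n'}$ and appends $\tfrac12 \vec{1}$ to the target), sets $r := s_p/\alpha$ with $s_p = s_p(n)$ as in \cref{eq:s_p}, and outputs the $(S,T)$-$\BDD_{p,\alpha}$ instance $(B, r, \vec{t})$. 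Since $n$ is $\poly(n')$-bounded, this map is computable in time $\poly(n') $, so it is a deterministic Karp reduction; the only nontrivial parts to verify are that YES instances map to YES instances and NO instances to NO instances.

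For a YES instance, there is $\vec{x} \in \bit^{n'}$ with $\norm{B'\vec{x} - \vec{t}'}_p \leq 1$. \cref{item:num-short} of \cref{lem:cvp-to-st-bdd-transform} gives $N^{o}_p(\lat(B), r, \vec{0}) \leq N^{o}_p(\Z^{n}, r, \vec{0})$; since $r > 0$, both balls contain~$\vec{0}$, so subtracting one from each side yields $N^{o}_p(\lat(B) \setminus \set{\vec{0}}, r, \vec{0}) \leq N^{o}_p(\Z^{n} \setminus \set{\vec{0}}, r, \vec{0}) = S(n)$. Moreover $\alpha r = s_p$, so \cref{item:yes-num-close} gives $N_p(B \cdot \bit^{n}, \alpha r, \vec{t}) \geq 2^{n-n'} = T(n)$. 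Hence $(B, r, \vec{t})$ is a YES instance of $(S,T)$-$\BDD_{p,\alpha}$. For a NO instance, $\dist_p(\vec{t}', \lat(B')) > 1$, and \cref{item:no-far} then gives $\dist_p(\vec{t}, \lat(B)) > s_p = \alpha r$, so $(B, r, \vec{t})$ is a NO instance.

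For the preprocessing version, observe that~$B$ depends only on the preprocessing input~$B'$, and $r = s_p(n)/\alpha(n)$ depends only on~$n$, which is determined by the rank~$n'$ of~$B'$; both are therefore computable in the preprocessing phase and form the preprocessing part $(B, r)$ of the $(S,T)$-$\BDDP_{p,\alpha}$ instance, while~$\vec{t}$ is obtained from the query input~$\vec{t}'$ by a fixed $\tfrac12 \vec{1}_n$ pad and so is produced by the (polynomial-time) query map. Thus the same construction is a valid Karp reduction for the preprocessing problems. The only real subtlety in the whole argument is the off-by-one bookkeeping for the short-vector count—accounting for the zero vector, which is exactly why the statement is phrased with $\Z^{n} \setminus \set{\vec{0}}$—and the check that~$r$ may legitimately be placed in the preprocessing part; everything else is immediate from \cref{lem:cvp-to-st-bdd-transform}.
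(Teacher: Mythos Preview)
Your proof is correct and follows the same approach as the paper: apply the transformation of \cref{lem:cvp-to-st-bdd-transform} with $r = s_p/\alpha$, and read off correctness from its three items. You spell out more detail than the paper (in particular the zero-vector bookkeeping for the short-vector count and the preprocessing placement of~$r$), but the underlying argument is identical.
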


\begin{proof}
  Given an input $\GapCVP'_{p}$ instance $(B', \vec{t}')$, the
  reduction simply outputs $(B, r = s_{p}/\alpha, \vec{t})$, where
  $B, \vec{t}$ are as in \cref{eq:Bt}. Observe that this is also valid
  for the preprocessing problems because~$B$ and~$r$ depend only
  on~$B'$. Correctness follows immediately by
  \cref{lem:cvp-to-st-bdd-transform}.
\end{proof}

\subsection{Setting Parameters}
\label{sec:setting-params}

We now investigate the relationship among the choice of~$\ell_{p}$
norm (for finite~$p$), the BDD relative distance~$\alpha$, and the
rank ratio $C := n/n'$, subject to the constraint
\begin{equation}
  \label{eq:No-upper}
  N^{o}_p(\Z^{n}, s_{p}/\alpha, \vec{0}) \leq 2^{n-n'}/10 = T(n)/10 \ ,
\end{equation}
so that the reductions in \cref{cor:cvp-to-st-bdd,thm:st-bdd-to-bdd}
can be composed. For $p \in [1,\infty)$ and $C > 1$, define
\begin{align}
  \alpha_{p,C}^*
  &:= \inf \set{ \alpha^{*} > 0 : \min_{\tau > 0}
    \exp(\tau/(2\alpha^{*})^{p}) \cdot \Theta_p(\tau) \leq
    2^{1-1/C}} \ ,\label{eq:alpha_pC}\\
  \alpha_{p}^{*}
  &:= \lim_{C \to \infty} \alpha_{p,C}^{*}
    = \inf \set{ \alpha^{*} > 0 : \min_{\tau > 0}
    \exp(\tau/(2\alpha^{*})^{p}) \cdot \Theta_p(\tau) \leq 2}\ .\label{eq:alpha_p}
\end{align}
These quantities are well defined because for any $C > 1$ we have
$2^{1-1/C} > 1$, so the inequality in \cref{eq:alpha_pC} is satisfied
for sufficiently large~$\tau$ and~$\alpha^{*}$. Moreover, it is
straightforward to verify that~$\alpha_{p,C}^*$ is strictly decreasing
in both~$p$ and~$C$, and~$\alpha_{p}^{*}$ is strictly decreasing
in~$p$. Although it is not clear how to solve for these quantities in
closed form, it is possible to approximate them numerically to good
accuracy (see \cref{fig:bdd-hardness-bounds}), and to get quite tight
closed-form upper bounds (see~\cref{lem:analytic-ub-alpha-p-star}).
We now show that to satisfy \cref{eq:No-upper} it suffices to take any
constant $\alpha > \alpha_{p,C}^{*}$.

\begin{corollary}
  \label{cor:cvp-to-concrete-st-bdd}
  For any $p \in [1,\infty)$, $C \geq 1$, and constant
  $\alpha > \alpha_{p,C}^{*}$ (\cref{eq:alpha_pC}), there is a
  deterministic Karp reduction from $\GapCVP'_{p}$ (resp., with
  preprocessing) in rank $n'$ to the decision version of
  $(S, T)$-$\BDD_{p,\alpha}$ (resp., with preprocessing) in rank
  $n=Cn'$, where $S(n) = T(n)/10$ and $T(n) = 2^{(1-1/C)n}$.
\end{corollary}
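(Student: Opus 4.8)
The plan is to combine \cref{cor:cvp-to-st-bdd} with the parameter bound established via the Mazo–Odlyzko estimate, so that the generic $(S,T)$-$\BDD$ instance produced there actually has $T \geq 10 S$, as required to chain into \cref{thm:st-bdd-to-bdd} later. Concretely, I would set $n := Cn'$ (which is $\poly(n')$-bounded since $C$ is a constant), and invoke \cref{cor:cvp-to-st-bdd} with this choice of $n$. That already gives a deterministic Karp reduction from $\GapCVP'_p$ in rank $n'$ to $(S,T)$-$\BDD_{p,\alpha}$ in rank $n$, with $T(n) = 2^{n-n'} = 2^{(1-1/C)n}$ and $S(n) = N^o_p(\Z^n \setminus \set{\vec{0}}, s_p/\alpha, \vec{0}) \leq N_p(\Z^n, s_p/\alpha, \vec{0})$. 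So the only thing left to verify is the inequality $N_p(\Z^n, s_p/\alpha, \vec{0}) \leq T(n)/10 = 2^{(1-1/C)n}/10$, which is exactly \cref{eq:No-upper}.

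For that inequality I would apply \cref{prop:short-vec-ub} with $r = s_p/\alpha = \tfrac12 (n+1)^{1/p}/\alpha$, giving
\[
  N_p(\Z^n, s_p/\alpha, \vec{0}) \leq \min_{\tau > 0} \exp(\tau \cdot (s_p/\alpha)^p) \cdot \Theta_p(\tau)^n = \min_{\tau > 0} \exp\parens[\big]{\tau (n+1)/(2\alpha)^p} \cdot \Theta_p(\tau)^n \ .
\]
Now $\exp(\tau(n+1)/(2\alpha)^p) = \exp(\tau/(2\alpha)^p)^{n} \cdot \exp(\tau/(2\alpha)^p)$, so up to the single extra bounded factor $\exp(\tau/(2\alpha)^p)$ (which is a constant once $\tau$ is fixed), the bound is $\parens[\big]{\exp(\tau/(2\alpha)^p)\Theta_p(\tau)}^n$. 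Since $\alpha > \alpha_{p,C}^*$, the definition of $\alpha_{p,C}^*$ in \cref{eq:alpha_pC} (together with the fact that $\exp(\tau/(2\alpha^*)^p)\Theta_p(\tau)$ is continuous and strictly decreasing in $\alpha^*$) guarantees that there is some fixed $\tau_0 > 0$ with $\exp(\tau_0/(2\alpha)^p) \cdot \Theta_p(\tau_0) \leq 2^{1-1/C} - \delta$ for some constant $\delta > 0$. Plugging $\tau = \tau_0$ into the $\min$, we get $N_p(\Z^n, s_p/\alpha, \vec{0}) \leq \exp(\tau_0/(2\alpha)^p) \cdot (2^{1-1/C}-\delta)^n$, and since $(2^{1-1/C}-\delta)^n = 2^{(1-1/C)n} \cdot ((2^{1-1/C}-\delta)/2^{1-1/C})^n$ decays geometrically relative to $2^{(1-1/C)n}$, for all sufficiently large $n$ the constant factor $\exp(\tau_0/(2\alpha)^p)$ is absorbed and the product is at most $2^{(1-1/C)n}/10$, as needed. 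For the finitely many small $n$ this is handled by noting that $C \geq 1$ is fixed and the reduction statement is asymptotic (or one can simply pad).

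Finally I would observe, as in the proofs of \cref{cor:cvp-to-st-bdd} and \cref{lem:cvp-to-st-bdd-transform}, that the entire construction ($B$, $r = s_p/\alpha$, and $\vec{t}$) depends only on $B'$ and on the fixed parameters, so it is a valid deterministic Karp reduction for the preprocessing variants as well, with $r$ landing in the preprocessing part of the output. The step I expect to require the most care is making the "$\alpha > \alpha_{p,C}^*$ implies a strict gap" argument fully rigorous — i.e., arguing that one can choose a single constant $\tau_0$ (independent of $n$) achieving $\exp(\tau_0/(2\alpha)^p)\Theta_p(\tau_0) < 2^{1-1/C}$ strictly, so that the bounded prefactor $\exp(\tau_0/(2\alpha)^p)$ and the factor $1/10$ are both swallowed by the geometric slack for large $n$. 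Everything else is bookkeeping on top of the already-established \cref{cor:cvp-to-st-bdd,prop:short-vec-ub}.
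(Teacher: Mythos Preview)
Your proposal is correct and follows essentially the same approach as the paper: invoke \cref{cor:cvp-to-st-bdd} with $n=Cn'$, then use the Mazo--Odlyzko bound (\cref{prop:short-vec-ub}) together with the strict inequality guaranteed by $\alpha > \alpha_{p,C}^*$ to show $N_p(\Z^n, s_p/\alpha, \vec{0}) \leq 2^{(1-1/C)n}/10$ for all large enough~$n$. The paper's write-up is slightly terser (it absorbs the extra $\exp(\tau/(2\alpha)^p)$ factor by pulling everything inside the $n$th power as $\exp(\tau/(n(2\alpha)^p))$), but the substance is identical, including the observation that the preprocessing variant goes through unchanged.
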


\begin{proof}
  Recalling that $s_{p} = \tfrac12 (n+1)^{1/p}$, by \cref{prop:short-vec-ub},
  $N^{o}_{p}(\Z^{n}, s_{p}/\alpha, \vec{0})$ is at most
  \begin{align*}
    N_p(\Z^{n}, s_{p}/\alpha, \vec{0})
    &\leq \min_{\tau > 0} \exp(\tau \cdot (s_{p}/\alpha)^p) \cdot
      \Theta_p(\tau)^{n} \\
    &= \min_{\tau > 0} \exp(\tau \cdot (n+1)/(2\alpha)^{p}) \cdot
      \Theta_p(\tau)^{n} \\
    &= \parens[\Big]{\min_{\tau > 0} \exp(\tau / (n (2\alpha)^{p}))
      \cdot \exp(\tau/(2\alpha)^{p}) \cdot \Theta_{p}(\tau)}^{n} \ .
  \end{align*}
  Because $\alpha > \alpha_{p,C}^{*}$, we have that
  $\min_{\tau > 0} \exp(\tau/(2\alpha)^{p}) \cdot \Theta_{p}(\tau)$ is
  a constant strictly less than $2^{1-1/C}$. So,
  $N^{o}_{p}(\Z^{n}, s_{p}/\alpha, \vec{0}) \leq 2^{(1-1/C)n}/10 =
  T(n)/10$ for all large enough~$n$. The claim follows from
  \cref{cor:cvp-to-st-bdd}.
\end{proof}

\begin{theorem}
  \label{thm:cvp-to-bdd}
  For any $p \in [1,\infty)$, $C \geq 1$, and constant
  $\alpha > \alpha_{p,C}^{*}$, there is a randomized Cook reduction
  with no false positives from $\GapCVP'_{p}$ (resp., with
  preprocessing) in rank~$n'$ to $\BDD_{p,\alpha}$ (resp., with
  preprocessing) in rank~$n=Cn'+1$.  Furthermore, the same holds for
  $p=\infty$, $C=1$, $\alpha=1/2$, and the reduction is deterministic.
\end{theorem}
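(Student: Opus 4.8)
The plan is to prove \cref{thm:cvp-to-bdd} purely by composing reductions that have already been established, while carefully tracking ranks, randomness, the ``no false positives'' guarantee, and the preprocessing structure. For $p \in [1,\infty)$ and $C > 1$ (for finite $p$ the case $C = 1$ is vacuous, since then $2^{1-1/C} = 1$ forces $\alpha_{p,1}^{*} = +\infty$, so no constant $\alpha > \alpha_{p,1}^{*}$ exists), I would chain three reductions. First, \cref{cor:cvp-to-concrete-st-bdd} provides a deterministic Karp reduction from $\GapCVP'_{p}$ in rank $n'$ to the \emph{decision} version of $(S,T)$-$\BDD_{p,\alpha}$ in rank $n = Cn'$, with $S(n) = T(n)/10$ and $T(n) = 2^{(1-1/C)n}$. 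Second, there is the trivial (deterministic, one-sided) reduction from the decision version of $(S,T)$-$\BDD_{p,\alpha}$ to its search version noted after \cref{def:st-bdd}: call the search oracle on $(B, r, \vec{t})$ and output YES iff the returned vector lies within $\ell_{p}$-distance $\alpha r$ of $\vec{t}$; on a NO instance no lattice vector is that close, so YES is never output. Third, since $T(n) = 10\, S(n)$ and $S(n) \geq 1$ for all sufficiently large $n$ (the finitely many small $n$, corresponding to bounded rank $n'$, can be handled by a brute-force lookup), \cref{thm:st-bdd-to-bdd} gives a randomized Cook reduction with no false positives from the search version of $(S,T)$-$\BDD_{p,\alpha}$ in rank $n$ to $\BDD_{p,\alpha}$ in rank $n+1 = Cn'+1$. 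Composing the three yields the claimed randomized Cook reduction with no false positives; each component also has a preprocessing analogue (for the middle step, trivially; for the other two, as stated in the respective lemmas), so the identical composition reduces $\GapCVPP'_{p}$ to $\BDDP_{p,\alpha}$.

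For $p = \infty$, $C = 1$, $\alpha = 1/2$, the function $\Theta_{p}$ and hence \cref{cor:cvp-to-concrete-st-bdd} are unavailable, so I would instead invoke \cref{cor:cvp-to-st-bdd} directly with $p = \infty$ and $n = n'$. By \cref{eq:s_p}, $s_{\infty} = 1/2$, hence $s_{\infty}/\alpha = 1$, and the short-vector count is $S(n) = N^{o}_{\infty}(\Z^{n} \setminus \set{\vec{0}}, 1, \vec{0})$; since every nonzero integer vector has a coordinate of absolute value at least $1$, its $\ell_{\infty}$ norm is at least $1$, so it lies outside the \emph{open} unit ball and $S(n) = 0$. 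Also $T(n) = 2^{n-n'} = 1 > 0$. Thus \cref{cor:cvp-to-st-bdd} gives a deterministic Karp reduction from $\GapCVP'_{\infty}$ in rank $n'$ to the decision version of $(0,1)$-$\BDD_{\infty,1/2}$ in rank $n'$; composing with the trivial decision-to-search reduction and then with the \emph{deterministic} Cook reduction of \cref{lem:0t-bdd} from search $(0,1)$-$\BDD_{\infty,1/2}$ in rank $n'$ to $\BDD_{\infty,1/2}$ in rank $n'+1 = Cn'+1$ gives a wholly deterministic reduction, whose preprocessing analogue again follows since every component preserves the preprocessing structure.

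Since the substantive work was done in the earlier lemmas, there is no genuinely hard step here; the main task is the bookkeeping — lining up the search/decision and preprocessing/non-preprocessing variants at each link, checking the hypothesis $T \geq 10 S$ of \cref{thm:st-bdd-to-bdd} (which holds with the factor exactly $10$) together with $S(n) \geq 1$ for large $n$, and observing that ``no false positives'' survives composition because the only randomized step (sparsification, inside \cref{thm:st-bdd-to-bdd}) is one-sided. The one point needing an argument beyond pure composition is the $p = \infty$ case, where one must note that the open radius-$1$ $\ell_{\infty}$ ball contains no nonzero point of $\Z^{n}$, which is precisely what makes $S = 0$ and hence the deterministic \cref{lem:0t-bdd} applicable.
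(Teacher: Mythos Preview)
Your proposal is correct and follows essentially the same approach as the paper's own proof: for finite $p$ you compose \cref{cor:cvp-to-concrete-st-bdd}, the trivial decision-to-search step, and \cref{thm:st-bdd-to-bdd}; for $p=\infty$ you invoke \cref{cor:cvp-to-st-bdd} directly, observe $S=0$ via the open $\ell_\infty$ unit ball argument, and finish with \cref{lem:0t-bdd}. Your additional bookkeeping remarks (the vacuity of $C=1$ for finite $p$, the verification of $S(n)\geq 1$ for large $n$, and the explicit justification of one-sidedness) are sound refinements that the paper leaves implicit.
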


\begin{proof}
  For finite $p$, we simply compose the reductions from
  \cref{cor:cvp-to-concrete-st-bdd,thm:st-bdd-to-bdd}, with the
  trivial decision-to-search reduction for $(S,T)$-$\BDD_{p,\alpha}$
  in between.

  For $p = \infty$, we first invoke the deterministic reduction from
  \cref{cor:cvp-to-st-bdd}, from $\GapCVP'_{\infty}$ in rank~$n'$ to
  $(S,T)$-$\BDD_{\infty,1/2}$ in rank~$Cn' = n'$, where
  $S = N^{o}_{\infty}(\Z^{n} \setminus \set{\vec{0}}, 1, \vec{0}) = 0$
  and $T = 2^{0} > 0$. By \cref{lem:0t-bdd}, the latter problem
  reduces deterministically to $\BDD_{\infty,1/2}$ in rank
  $n'+1$.

  Lastly, all of these reductions work for the preprocessing problems
  as well, because their component reductions do.
\end{proof}

\subsection{Putting it all Together}
\label{sec:putting-together}

We now combine our reductions from $\GapSVP'$ to $\BDD$ with prior
hardness results for $\GapCVP'$ (stated below in
\cref{thm:hardness-of-cvp}) to obtain our ultimate hardness theorems
for $\BDD$. We first recall relevant known hardness results for
$\GapCVP'_p$ and $\GapCVPP'_p$.

\begin{theorem}[{\cite{journals/tit/Micciancio01,conf/focs/BennettGS17,aggarwal2019finegrained}}]
  \label{thm:hardness-of-cvp}
  The following hold for $\GapCVP'_p$ and $\GapCVPP'_p$ in rank~$n$:
  \begin{enumerate}[itemsep=0pt]
  \item For every $p \in [1, \infty]$, $\GapCVP'_p$ is $\NP$-hard, and
    $\GapCVPP'_p$ has no polynomial-time (preprocessing) algorithm
    unless $\NP \subseteq \PPoly$.\label{en:np-hardness}
  \item For every $p \in [1, \infty]$, there is no $2^{o(n)}$-time
    randomized algorithm for $\GapCVP'_p$ unless randomized ETH
    fails.\label{en:cvp-eth-hardness}
  \item For every $p \in [1, \infty] \setminus \set{2}$, there is no
    $2^{o(n)}$-time algorithm for $\GapCVPP'_p$, and there is no
    $2^{o(\sqrt{n})}$-time algorithm for $\GapCVPP_2'$, unless
    non-uniform ETH fails.\label{en:cvpp-eth-hardness}
  \item For every $p \in [1, \infty] \setminus 2\Z$ and every
    $\eps > 0$, there is no $2^{(1-\eps)n}$-time randomized algorithm
    for $\GapCVP'_p$ (respectively, $\GapCVPP'_p$) unless randomized
    SETH (resp., non-uniform SETH) fails.\label{en:seth-hardness}
  \end{enumerate}
\end{theorem}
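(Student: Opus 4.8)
The plan is to derive each item of \cref{thm:hardness-of-cvp} from prior work on the fine-grained hardness of the Closest Vector Problem with and without preprocessing, the only real task being to match the precise promise form of $\GapCVP'_p$ and $\GapCVPP'_p$ fixed in \cref{def:gap-cvp-prime}: YES instances must admit a coefficient vector in $\bit^n$ placing a lattice point within distance~$1$ of~$\vec{t}$, while NO instances need only satisfy the weak condition $\dist_p(\vec{t},\lat(B)) > 1$ (and not the stronger $\dist_p(z\vec{t},\lat(B)) > 1$ for all nonzero integers~$z$ used in some treatments). Granting this, \cref{en:np-hardness} is immediate, since the reductions invoked below are in particular polynomial-time Karp reductions (in the sense of \cref{def:preprocessing-mapping-reduction} in the preprocessing case) from $3$-SAT: the reduction of Bennett, Golovnev, and Stephens-Davidowitz~\cite{conf/focs/BennettGS17} yields the non-preprocessing $\NP$-hardness of $\GapCVP'_p$, and the preprocessing $\NP$-hardness of $\GapCVPP'_p$ (under $\NP \subseteq \PPoly$) follows from Micciancio~\cite{journals/tit/Micciancio01} (or, for $p \neq 2$, from the more rank-efficient preprocessing reduction of~\cite{aggarwal2019finegrained}).

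For the fine-grained non-preprocessing statements \cref{en:cvp-eth-hardness,en:seth-hardness}, I would invoke the rank-efficient reduction of~\cite{conf/focs/BennettGS17} from $k$-SAT on~$n'$ variables to $\CVP_p$ in rank $n=(1+o(1))n'$, whose YES/NO behavior is exactly that of a $\GapCVP'_p$ instance (a satisfying assignment becomes a $\bit^n$-combination within distance~$1$ of the target, and otherwise every lattice vector is farther than~$1$). Under (randomized) ETH this yields the claimed $2^{o(n)}$ lower bound for all~$p$, and under (randomized) SETH the $2^{(1-\eps)n}$ lower bound for $p \in [1,\infty]\setminus 2\Z$, the even integers being excluded exactly as in that work. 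For the preprocessing statements \cref{en:cvpp-eth-hardness,en:seth-hardness}, I would instead cite Aggarwal et al.~\cite{aggarwal2019finegrained}, who prove precisely the stated lower bounds for $\GapCVPP'_p$ under non-uniform ETH and non-uniform SETH, including the weaker $2^{o(\sqrt n)}$ bound for $p=2$; that weaker bound is inherited from the fact that the only known $\NP$-hardness reduction for CVPP in the Euclidean norm (Micciancio's) incurs a polynomial blowup in rank, so a $3$-SAT instance on $n'$ variables maps only to rank $n = \poly(n')$ rather than $n = (1+o(1))n'$.

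The heart of the argument is bookkeeping rather than new mathematics, so the steps I would actually carry out are verification steps: (a)~confirm that each cited reduction outputs instances obeying the $\GapCVP'_p$/$\GapCVPP'_p$ promise of \cref{def:gap-cvp-prime}, in particular the $\bit^n$-witness condition in the YES case (which \cref{lem:cvp-to-st-bdd-transform} then exploits) and only the weak NO condition, so that the quoted results remain applicable even though some of them are not known under the stronger NO condition; (b)~confirm the rank efficiency $n=(1+o(1))n'$ in the ETH/SETH cases (and track the $n=\poly(n')$ loss in the $p=2$ preprocessing case), since it is this quantity, composed with \cref{thm:cvp-to-bdd}, that controls the fine-grainedness of the final $\BDD$ bounds in \cref{thm:bdd-hardness}; and (c)~confirm that the preprocessing reductions are of the mapping/Karp type so that they compose with the derandomization of \cref{cor:preprocessing-k-sat-rerandomization}. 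I expect the only genuinely delicate point to be disentangling the $p=2$ preprocessing case from the rest, since it is the polynomial rank blowup there---rather than our $\GapCVP' \to \BDD$ step---that is responsible for the $2^{o(\sqrt n)}$ bound in \cref{item:bdd-nu-eth-hardness} of \cref{thm:bdd-hardness}.
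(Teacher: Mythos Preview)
Your plan is essentially the paper's own approach: the theorem is quoted from prior work, and the ``proof'' consists of the verification steps you describe, which the paper carries out in \cref{rem:gapcvpp-variants}. You correctly flag the need to check the $\bit^n$-witness YES condition and the weak NO condition, and to track rank efficiency.

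There is one technical point you omit that the paper does address explicitly. In the preprocessing problem $\GapCVPP'_p$ as defined in \cref{def:gap-cvp-prime}, the distance threshold is fixed at~$1$, which implicitly means any instance-dependent threshold~$r$ must belong to the \emph{preprocessing} input (since one can rescale~$B$ by~$1/r$). But the cited reductions in~\cite{journals/tit/Micciancio01,conf/focs/BennettGS17,aggarwal2019finegrained} produce instances in which~$r$ depends on the \emph{query} input. The paper closes this gap with a short mapping reduction: observing that in each cited reduction~$r$ is bounded by some~$r^*$ computable from~$B$ alone, one appends a coordinate $((r^*)^p - r^p)^{1/p}$ to the target and a zero row to~$B$, pushing the effective threshold up to the preprocessing-time value~$r^*$. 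Your verification step~(a) would not catch this discrepancy, and your step~(c) does not address it either, so this should be added to your checklist. Apart from this, your proposal matches the paper's treatment.
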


\begin{remark}
  \label{rem:gapcvpp-variants}
  Several of the above results are stated slightly differently from
  what appears
  in~\cite{journals/tit/Micciancio01,conf/focs/BennettGS17,aggarwal2019finegrained}. First,
  all of the above results for $\GapCVP'_p$ (respectively,
  $\GapCVPP'_p$) are instead stated for $\GapCVP_p$ (resp.,
  $\GapCVPP_p$). However, inspection shows that the reductions are
  indeed to $\GapCVP'_p$ or $\GapCVPP'_p$, so this difference is
  immaterial.
	
  Second, the above statements ruling out \emph{randomized} algorithms
  for $\GapCVP'_p$ assuming randomized (S)ETH are instead phrased
  in~\cite{conf/focs/BennettGS17,aggarwal2019finegrained} as ruling
  out \emph{deterministic} algorithms for $\GapCVP'_p$ assuming
  deterministic (S)ETH. However, because these results are proved via
  deterministic reductions, randomized algorithms for $\GapCVP'_{p}$
  have the consequences claimed above.
	
  Third, the above results for $\GapCVPP_p'$ follow from the
  reductions given in (the proofs of)
  \cite{journals/tit/Micciancio01},
  \cite[Theorem~4.3]{aggarwal2019finegrained}, \cite[Theorem~1.4 and
  Lemma~6.1]{conf/focs/BennettGS17}, and
  \cite[Theorem~4.6]{aggarwal2019finegrained}.
  However, those reductions all prove hardness for the variant of
  $\GapCVPP'_p$ where the distance threshold~$r$ is part of the
  \emph{query} input, rather than the preprocessing input.  Inspection
  of~\cite[Theorem~4.6]{aggarwal2019finegrained} shows that~$r$ is
  fixed in the output instance, so this difference is immaterial in
  that case.  We next describe how to handle this difference for the
  remaining cases. Below we give, for any $p \in [1, \infty)$, a
  straightforward rank-preserving mapping reduction (in the sense of
  \cref{def:preprocessing-mapping-reduction}) from the variant of
  $\GapCVPP'_p$ where the distance threshold~$r$ is part of the query
  input to the variant where it is part of the preprocessing input,
  assuming that~$r$ is always at most some~$r^{*}$ that depends only
  on~$B$, and whose length $\log r^*$ is polynomial in the length
  of~$B$. Inspection shows that such an~$r^*$ does indeed exist for
  the reductions given in \cite{journals/tit/Micciancio01},
  \cite[Theorem~4.3]{aggarwal2019finegrained}, and
  \cite[Lemma~6.1]{conf/focs/BennettGS17}, which handles the second
  difference for those cases.

  The mapping reduction $(R_P, R_Q)$ in question maps
  $(B, (\vec{t}, r)) \mapsto ((B', r^{*}), \vec{t}')$ as
  follows. First, $R_P$ takes~$B$ as input, and sets
  $B' := \parens*{\begin{smallmatrix} B \\
      \vec{0}^t \end{smallmatrix}}$; it also outputs $\sigma' = r^{*}$
  as side information for $R_Q$. Then, $R_Q$ takes $(\vec{t},r)$ and
  $r^{*}$ as input, and outputs
  $\vec{t}' := (\vec{t}, ((r^{*})^p - r^p)^{1/p})$. Using the
  guarantee that $r^* \geq r$, it is straightforward to check that the
  output instance $((B', r^{*}), \vec{t}')$ is a YES instance
  (respectively, NO instance) if the input instance
  $(B, (\vec{t}, r))$ is a YES instance resp., NO instance, as
  required.
	
  Finally, we again remark that several of the hardness results
  in~\cref{thm:hardness-of-cvp} in fact hold under weaker versions of
  randomized or non-uniform (S)ETH that relate to Max-$3$-SAT
  (respectively, Max-$k$-SAT), instead of $3$-SAT
  (resp. $k$-SAT). Therefore, it is straightforward to obtain
  corresponding hardness results for BDD(P) under these weaker
  assumptions as well.
\end{remark}

\noindent We can now prove our main theorem, restated from the
introduction: \main*

\begin{proof}
  For BDD, each item of the theorem follows from the corresponding
  item of \cref{thm:hardness-of-cvp}, followed by
  \cref{thm:cvp-to-bdd} and then (where needed) rank-preserving norm
  embeddings from~$\ell_{2}$
  to~$\ell_{p}$~\cite{conf/stoc/RegevR06}. (Also,
  \cref{lem:analytic-ub-alpha-p-star} below provides the upper bound
  on~$\alpha_{p}^{*}$.) The claims for BDDP follow similarly, combined
  with the well-known fact that $\PPoly = \BPP/\mathsf{Poly}$
  and~\cref{cor:preprocessing-k-sat-rerandomization}.\footnote{In
    fact, $\PPoly = \BPP/\mathsf{Poly}$ also follows as a corollary of
    the more general derandomization result
    in~\cref{lem:weak-strong-equivalent}.}

\end{proof}

\subsection{An Upper Bound on \texorpdfstring{$\alpha_{p, C}^*$}{alpha\_(p,C){\textasciicircum}*} and \texorpdfstring{$\alpha_p^*$}{alpha\_p{\textasciicircum}*}}

We conclude with closed-form upper bounds on $\alpha_{p, C}^*$ and
$\alpha_p^*$. The main idea is to replace $\Theta_p(\tau)$ with an
upper bound of $\Theta_1(\tau)$ (which has a closed-form expression)
in \cref{eq:alpha_pC,eq:alpha_p}, then directly analyze the value of
$\tau > 0$ that minimizes the resulting expressions. This leads to quite
tight bounds (and also yields tighter bounds than the techniques used
in the proof of~\cite[Claim~4.4]{conf/stoc/AggarwalS18}, which bounds
a related quantity). For example, $\alpha_2^* \approx 1.05006$, and
the upper bound in~\cref{lem:analytic-ub-alpha-p-star} gives
$\alpha_2^* \leq 1.08078$; similarly, $\alpha_5^* \approx 0.672558$
and the upper bound in~\cref{lem:analytic-ub-alpha-p-star} gives
$\alpha_5^* \leq 0.680575$.

\begin{lemma}
  \label{lem:analytic-ub-alpha-p-star}
  Define
  \[ g(\sigma, \tau) := \exp(\tau/\sigma) \cdot \parens*{\frac{2}{1 -
        \exp(-\tau)} - 1} \] and
  $\tau^*(\sigma) := \arcsinh(\sigma) = \ln(\sigma + \sqrt{1 +
    \sigma^2})$.  Let $\sigma^*$ and $\sigma_C^*$ for $C > 1$ be the
  (unique) constants for which $g(\sigma^{*}, \tau^*(\sigma^{*})) = 2$
  and $g(\sigma_{C}^{*}, \tau^*(\sigma_{C}^{*})) = 2^{1 - 1/C}$. Then
  for any $p \in [1,\infty)$, we have
  \[
    \alpha_{p, C}^* \leq \frac{1}{2} \cdot (\sigma_C^*)^{1/p} \quad
    \text{and} \quad \alpha_p^* \leq \frac{1}{2} \cdot
    (\sigma^*)^{1/p} \leq \frac{1}{2} \cdot 4.6723^{1/p} \ .
  \]
  In particular, $\alpha_{p, C}^* \to 1/2$ as $p \to \infty$ for any
  fixed $C > 1$, and therefore $\alpha_p^* \to 1/2$ as $p \to \infty$.
\end{lemma}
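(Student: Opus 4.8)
The plan is to bound $\alpha_{p,C}^*$ by exhibiting, for each $p$, a specific value of $\alpha^*$ satisfying the defining inequality in \cref{eq:alpha_pC}, namely $\min_{\tau>0}\exp(\tau/(2\alpha^*)^p)\cdot\Theta_p(\tau)\le 2^{1-1/C}$. The first step is the elementary observation that $\Theta_p(\tau)\le\Theta_1(\tau)$ for all $\tau>0$ and all $p\ge 1$: indeed, $|z|^p\ge|z|$ for every integer $z$ (since $|z|\in\{0\}\cup[1,\infty)$), so $\exp(-\tau|z|^p)\le\exp(-\tau|z|)$ termwise. Moreover $\Theta_1(\tau)=\sum_{z\in\Z}\exp(-\tau|z|)=1+2\sum_{k\ge1}e^{-\tau k}=1+\tfrac{2e^{-\tau}}{1-e^{-\tau}}=\tfrac{2}{1-e^{-\tau}}-1$, which is exactly the parenthesized factor in $g(\sigma,\tau)$. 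Hence, writing $\sigma:=(2\alpha^*)^p$, the quantity to be controlled is at most $\min_{\tau>0}\exp(\tau/\sigma)\cdot\bigl(\tfrac{2}{1-e^{-\tau}}-1\bigr)=\min_{\tau>0}g(\sigma,\tau)$.

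Next I would compute the minimizing $\tau$ in closed form. Fixing $\sigma>0$ and setting $\frac{\partial}{\partial\tau}\ln g(\sigma,\tau)=0$ gives $\tfrac1\sigma=\tfrac{e^{-\tau}}{1-e^{-\tau}}\cdot\tfrac{1}{\,\cdot\,}$—more precisely, differentiating $\ln g=\tau/\sigma+\ln(2-(1-e^{-\tau}))-\ln(1-e^{-\tau})$ (rewriting $\tfrac{2}{1-e^{-\tau}}-1=\tfrac{1+e^{-\tau}}{1-e^{-\tau}}$) yields the stationarity condition, which after simplification becomes $\sinh\tau=\sigma$, i.e.\ $\tau=\tau^*(\sigma)=\arcsinh(\sigma)=\ln(\sigma+\sqrt{1+\sigma^2})$. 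A routine second-derivative or convexity check confirms this is the unique minimum, so $\min_{\tau>0}g(\sigma,\tau)=g(\sigma,\tau^*(\sigma))$. One then checks that $g(\sigma,\tau^*(\sigma))$ is continuous and strictly increasing in $\sigma$ from its value as $\sigma\to0^+$ (which is $1$, since $g\to 1$ as $\tau,\sigma\to 0$ appropriately) to $+\infty$, so there are unique $\sigma^*$ and $\sigma_C^*$ with $g(\sigma^*,\tau^*(\sigma^*))=2$ and $g(\sigma_C^*,\tau^*(\sigma_C^*))=2^{1-1/C}$.

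Now the bound follows: if $\sigma=(2\alpha^*)^p\ge\sigma_C^*$, i.e.\ $\alpha^*\ge\tfrac12(\sigma_C^*)^{1/p}$, then by monotonicity $\min_\tau g(\sigma,\tau)\ge$ —wait, we need the inequality in the \emph{right} direction; since $g(\sigma,\tau^*(\sigma))$ is \emph{increasing} in $\sigma$, taking $\alpha^*=\tfrac12(\sigma_C^*)^{1/p}$ gives $\sigma=\sigma_C^*$ and hence $\min_\tau\exp(\tau/\sigma)\Theta_p(\tau)\le g(\sigma_C^*,\tau^*(\sigma_C^*))=2^{1-1/C}$, so this $\alpha^*$ lies in the set defining $\alpha_{p,C}^*$ (or is its infimum), giving $\alpha_{p,C}^*\le\tfrac12(\sigma_C^*)^{1/p}$. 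The $C\to\infty$ case ($2^{1-1/C}\to 2$) gives $\alpha_p^*\le\tfrac12(\sigma^*)^{1/p}$. Finally, solving $g(\sigma^*,\tau^*(\sigma^*))=2$ numerically yields $\sigma^*\approx4.6723$, and rounding up gives the stated $\tfrac12\cdot4.6723^{1/p}$; since $\sigma_C^*,\sigma^*$ are absolute constants independent of $p$, both bounds tend to $\tfrac12$ as $p\to\infty$. The only mild subtlety—the ``hard part''—is verifying cleanly that $\tau^*(\sigma)$ really is the global minimizer (one must rule out boundary behavior as $\tau\to0^+$ where $\Theta_1(\tau)\to\infty$ but $\exp(\tau/\sigma)\to1$, and as $\tau\to\infty$ where $\Theta_1\to1$ but $\exp(\tau/\sigma)\to\infty$; both show the minimum is interior) and that $g(\sigma,\tau^*(\sigma))$ is monotonic in $\sigma$ so that $\sigma^*,\sigma_C^*$ are well-defined — these are elementary one-variable calculus facts but should be stated explicitly.
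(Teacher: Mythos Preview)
Your approach is essentially the same as the paper's: bound $\Theta_p$ by $\Theta_1$, rewrite in terms of $\sigma=(2\alpha^*)^p$, optimize over $\tau$ to get $\tau^*(\sigma)=\arcsinh(\sigma)$, and then invoke existence of $\sigma^*,\sigma_C^*$.

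There is, however, a concrete error in your monotonicity analysis. You assert that $g^*(\sigma):=g(\sigma,\tau^*(\sigma))$ is strictly \emph{increasing}, with $g^*(\sigma)\to 1$ as $\sigma\to 0^+$ and $g^*(\sigma)\to\infty$ as $\sigma\to\infty$. This is backwards. As $\sigma\to 0^+$ we have $\tau^*(\sigma)=\arcsinh(\sigma)\to 0^+$ with $\tau^*(\sigma)/\sigma\to 1$, so $\exp(\tau^*/\sigma)\to e$ while $\tfrac{1+e^{-\tau^*}}{1-e^{-\tau^*}}\to+\infty$; hence $g^*(\sigma)\to\infty$. As $\sigma\to\infty$ we have $\tau^*(\sigma)\sim\ln(2\sigma)\to\infty$, so $\tau^*/\sigma\to 0$ and both factors tend to $1$; hence $g^*(\sigma)\to 1$. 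Thus $g^*$ is strictly \emph{decreasing} from $\infty$ to $1$, exactly as the paper states. (Your own ``wait'' aside hints that you sensed the direction was off.)

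Fortunately this mistake does not actually damage the conclusion: the bound $\alpha_{p,C}^*\le\tfrac12(\sigma_C^*)^{1/p}$ only needs that $\sigma_C^*$ \emph{exists} with $g^*(\sigma_C^*)=2^{1-1/C}$, and then plugging in $\alpha^*=\tfrac12(\sigma_C^*)^{1/p}$ gives a point in the defining set of $\alpha_{p,C}^*$. The monotonicity is used only for uniqueness of $\sigma^*,\sigma_C^*$ (part of the lemma statement), and for that you need the correct direction. Fix the limits and the sign of the monotonicity and the proof goes through.
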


\begin{proof}
  For any $\tau > 0$, by the definition of $\Theta_p(\tau)$ and the
  formula for summing geometric series we have
  \begin{equation}
    \Theta_p(\tau) \leq \Theta_1(\tau) =
    1 + 2 \sum_{i=1}^{\infty} \exp(-\tau)^i
    = \frac{2}{1 - \exp(-\tau)} - 1 \ .
    \label{eq:gs-ub-theta}
  \end{equation}
	
  Define the objective function
  \begin{align*}
    f(p, \alpha) &:= \min_{\tau > 0} \exp(\tau/(2\alpha)^p) \cdot \Theta_p(\tau)
  \end{align*}
  to be the expression that is upper-bounded in
  \cref{eq:alpha_pC,eq:alpha_p}.  For any fixed $\alpha > 0$, set
  $\sigma := (2 \alpha)^p$.  Applying \cref{eq:gs-ub-theta}, it
  follows that $f(p, \alpha) \leq g(\sigma, \tau)$ for any $\tau >
  0$. This implies that if there exists some $\tau > 0$ satisfying
  $g(\sigma, \tau) \leq 2$ then
  $\alpha_p^* \leq \frac{1}{2} \sigma^{1/p}$, and similarly, if
  $g(\sigma, \tau) \leq 2^{1 - 1/C}$ then
  $\alpha_{p,C}^* \leq \frac{1}{2} \sigma^{1/p}$.
	
  By standard calculus,
  \[
    \frac{\partial g}{\partial \tau} = \frac{e^{\tau/\sigma}}{1 -
      e^{-\tau}} \cdot \parens[\Big]{(1 + e^{-\tau})/\sigma - 2
      e^{-\tau}/(1 - e^{-\tau})} \ .
  \]
  Setting the right-hand side of the above expression equal to $0$ and
  solving for $\tau$ yields the single real solution
  \[
    \tau = \tau^*(\sigma) = \arcsinh(\sigma) = \ln(\sigma + \sqrt{1 + \sigma^2}) \ ,
  \]
  which is a local minimum, and therefore a global minimum of
  $g(\sigma, \tau)$ for any fixed $\sigma > 0$.
		
  Define the univariate function
  $g^{*}(\sigma) := g(\sigma, \tau^*(\sigma))$. The fact that
  $\sigma^*$ and $\sigma_C^*$ exist and are unique follows by noting
  that $\lim_{\sigma \to 0^+} g^{*}(\sigma) = \infty$, that
  $\lim_{\sigma \to \infty} g^{*}(\sigma) = 1$, and that
  $g^{*}(\sigma)$ is strictly decreasing in $\sigma > 0$. By
  definition of $\sigma^*$ (respectively, $\sigma_C^*$), it follows
  that $g^{*}(\sigma^{*}) = 2$ for
  $\alpha = \frac{1}{2} (\sigma^*)^{1/p}$, and
  $g^{*}(\sigma^{*}_{C}) = 2^{1 - 1/C}$ for
  $\alpha = \frac{1}{2} (\sigma_C^*)^{1/p}$, as desired.  Moreover,
  one can check numerically that $\sigma^* \leq 4.6723$.
\end{proof}

\bibliography{bdd}
\bibliographystyle{alphaabbrvprelim}

\end{document}